\setlist[enumerate]{itemsep=0pt, topsep=0pt}
\setlist[itemize]{itemsep=0pt, topsep=0pt}
\titlerule\vspace{-2ex}}
\titleformat{\subsection}[runin]
  {\normalfont\normalsize\bfseries}{\thesubsection.}{1ex}{}	
\titlespacing*{\subsection}{0pt}{0.0\baselineskip}{1ex}
\newtheorem{theorem}{Theorem}[section]
\newtheoremstyle{style}
  {\baselineskip} 
  {0em} 
  {\itshape} 
  {} 
  {\bfseries} 
  {.} 
  {.5em} 
  {} 
\theoremstyle{style}
\newtheorem*{theorem*}{Theorem}
\numberwithin{equation}{section}
\newtheorem{assumption}{Assumption}
\newtheorem{definition}{Definition}
\newtheorem*{definition*}{Definition}
\newtheorem{example}[theorem]{Example}
\newtheorem{lemma}[theorem]{Lemma}
\newtheorem*{lemma*}{Lemma}
\newtheorem{proposition}[theorem]{Proposition}
\renewenvironment{proof}[1][\proofname]{\par
  \pushQED{\qed}%
  \normalfont
  \topsep0ex   
  \trivlist
  \item[\hskip\labelsep\itshape
    #1\@addpunct{.}]\ignorespaces
}{%
  \popQED\endtrivlist
  \pushQED{}%
}
\renewcommand{\algocf@captiontext}[2]{#1\algocf@typo. \AlCapFnt{}#2} 
\def\@algocf@capt@plain{top}
\renewcommand{\algocf@makecaption}[2]{%
  \addtolength{\hsize}{\algomargin}%
  \sbox\@tempboxa{\algocf@captiontext{#1}{#2}}%
  \ifdim\wd\@tempboxa >\hsize
    \hskip .5\algomargin%
    \parbox[t]{\hsize}{\algocf@captiontext{#1}{#2}}
  \else%
    \global\@minipagefalse%
    \hbox to\hsize{\box\@tempboxa}
  \fi%
  \addtolength{\hsize}{-\algomargin}%
}
\DeclareMathOperator*{\argmax}{arg\,max}
\title{Sequential Gibbs Posteriors with Applications to Principal Component Analysis}
\author{Steven Winter\textsuperscript{1}}
\author{Omar Melikechi\textsuperscript{1}}
\author{David B. Dunson\textsuperscript{1,3}}
\address{\textsuperscript{1}Department of Statistical Science, Duke University, Durham, North Carolina} 
\address{\textsuperscript{3}Department of Mathematics, Duke University, Durham, North Carolina}
\begin{document}

\frenchspacing

\maketitle

\begin{abstract}
Gibbs posteriors are proportional to a prior distribution multiplied by an exponentiated loss function, with a key tuning parameter weighting information in the loss relative to the prior and providing control of posterior uncertainty. Gibbs posteriors provide a principled framework for likelihood-free Bayesian inference, but in many situations, including a single tuning parameter inevitably leads to poor uncertainty quantification. In particular, regardless of the value of the parameter, credible regions have far from the nominal frequentist coverage even in large samples. We propose a sequential extension to Gibbs posteriors to address this problem. We prove the proposed sequential posterior exhibits concentration and a Bernstein-von Mises theorem, which holds under easy to verify conditions in Euclidean space and on manifolds. As a byproduct, we obtain the first Bernstein-von Mises theorem for traditional likelihood-based Bayesian posteriors on manifolds. All methods are illustrated with an application to principal component analysis.
\end{abstract}
The standard Bayesian approach to data analysis involves specifying a generative model for the data via the likelihood, defining priors for all parameters, and computing parameter summaries using the posterior distribution defined by Bayes' rule. This paradigm has a number of advantages, allowing rich hierarchical models for complicated data generating processes, inclusion of expert information, and a full characterization of uncertainty in inference. One practical challenge arises in specifying realistic likelihoods for complex, high-dimensional data such as images or spatiotemporal processes. Realistic likelihoods from highly flexible parametric families may depend on more parameters than can be estimated from the data, introducing both theoretical and practical challenges for Bayesian analysis. Conversely, tractable likelihoods may miss important aspects of the data generating mechanism, leading to bias in posterior estimates, under-representation of parameter uncertainty, and poor predictive performance. The goal of this article is to extend likelihood-free Bayesian inference by leveraging loss-based learning.

Loss-based learning is an alternative approach which typically defines a loss measuring how well a parameter describes the data, estimates parameters by minimizing the loss, and occasionally quantifies estimation uncertainty relying on distributional assumptions, large-sample asymptotics, or nonparametric methods such as the bootstrap. This paradigm avoids specification of a likelihood, sidestepping the unfavorable trade-off between realism and tractability, but often requires strong distribution assumptions or large sample sizes for valid characterization of uncertainty \citep{ogasawara2002concise, potscher2009distribution, van2014asymptotically, zhang2014confidence}. Nonparametric methods such as the bootstrap perform well in a wide array of situations, but may under-represent uncertainty when data are heavy tailed, contain outliers, or are high dimensional \citep{schenker1985qualms, hall1990asymptotic, kysely2008cautionary, karoui2016can}. 

Gibbs posteriors offer an appealing middle-ground between the Bayesian and loss-based paradigms by replacing the negative log-likelihood with a loss function. Given a loss $\ell^{(n)}$ linking a parameter $\theta$ to $n$ observations $ x=(x_1,\dots,x_n)$, inference is based on the Gibbs posterior,
\begin{equation}
    \Pi^{(n)}_{\eta}(d\theta\mid  x) \propto \exp\{-\eta n\ell^{(n)}(\theta\mid x)\}\Pi^{(0)}(d\theta),\label{eq:gibbs}
\end{equation}
where $\Pi^{(0)}$ is the prior and $\eta>0$ is a hyperparameter weighting information in the loss relative to the prior. Gibbs posteriors allow valid Bayesian inference on $\theta$ without needing to specify a likelihood function. Furthermore, probabilistic statements from \eqref{eq:gibbs} do not rely on distributional assumptions, large-sample asymptotics, or non-parametric approximations. The robustness properties of \eqref{eq:gibbs} have been exploited in applications such as logistic regression, quantile estimation, image boundary detection, and clustering \citep{jiang2008gibbs, syring2020robust, bhattacharya2022gibbs, rigon2023generalized}.

Gibbs posteriors can be justified from a variety of foundational perspectives.
\cite{bissiri2016general} begin with the goal of updating prior beliefs about a risk minimizer, and derive \eqref{eq:gibbs} as the unique, coherent generalization of Bayes' rule. This provides rigorous justification for use of \eqref{eq:gibbs} in Bayesian inference. Gibbs posteriors also arise when studying the generalization error of randomized algorithms \citep{guedj2019primer}. A common goal in this literature is to establish high-probability upper bounds on the risk or average risk of a randomized estimator; \eqref{eq:gibbs} is obtained by minimizing an upper bound for the average risk \citep{zhang2006a, zhang2006b}. In this context, the loss is an empirical risk function and the prior is an arbitrary reference measure. This framework has been used to provide new insights into classical methods such as empirical risk minimization, and to derive state of the art generalization bounds for modern machine learning algorithms \citep{dziugaite2017computing, neyshabur2017exploring}.

In practice, the performance of \eqref{eq:gibbs} depends critically on $\eta$, which appears because the scale of the loss is arbitrary relative to the prior. Popular approaches for tuning $\eta$ include cross-validation \citep{germain2009pac,thiemann2017strongly}, hyperpriors \citep{bissiri2016general,rigon2023generalized}, and matching credible intervals to confidence intervals \citep{syring2019calibrating}. A recent review by \cite{wu2023comparison} compares calibration methods from \cite{syring2019calibrating,lyddon2019general,holmes2017assigning,grunwald2017inconsistency}. To provide a Bayesian approach to inference, which is also acceptable to frequentists, it is appealing for $\eta$ to be chosen so that credible intervals from \eqref{eq:gibbs} have correct coverage \cite[Section 4]{martin2022direct}. Unfortunately, this is often impossible even in simple situations.

\textcolor{black}{Principal component analysis provides a key example where current technology falls short, and is one motivating application for our work.} Failure to account for uncertainty in components results in under-representation of uncertainty in downstream inference. Accounting for uncertainty is a difficult task: principal component analysis is routinely applied to complicated, high-dimensional datasets with relatively small sample sizes, resulting in violations of assumptions for loss-based uncertainty quantification and practical challenges in choosing a realistic likelihood. \textcolor{black}{Furthermore, orthonormality constraints imply that components are supported on a manifold, and naively constructed Euclidean intervals may be misleading.} Conceptually, these factors make principal component analysis an excellent use-case for Gibbs posteriors. However, in practice we find \eqref{eq:gibbs} cannot produce credible intervals for components with correct or near correct coverage. Similar coverage problems arise broadly when using Gibbs posteriors to study multiple quantities of interest.

We propose a generalization of Gibbs posteriors that overcomes these shortcomings by allowing a different tuning parameter controlling uncertainty for each quantity of interest. \textcolor{black}{Our framework is tailored to sequential problems, where each quantity is connected to the data only through a loss function and each loss naturally depends on previously estimated quantities. This encompasses many applications, including principal component analysis, multi-scale modeling, tensor factorizations, and general hierarchical models.} We extend existing Gibbs posterior theory, establishing concentration and a Bernstein-von Mises theorem under weak assumptions for losses defined on manifolds. Taking the loss to be a negative log-likelihood, we obtain what we believe is the first Bernstein-von Mises for arbitrary traditional likelihood-based Bayesian posteriors supported on manifolds. Our conditions can be verified with calculus in any chart, and do not require any advanced differential geometry machinery. \textcolor{black}{The theoretical and practical utility of our approach is made concrete through an application to principal component analysis.}

\section{Sequential Gibbs Posteriors}\label{sec:gibbs}

\subsection{Motivation}\label{sec:motivation}
We begin with a simple example illustrating the failure of Gibbs posteriors and motivating our proposed solution. Consider estimating the mean $\mu = E(X)\in\mathbb{R}$ by minimizing the risk
\begin{align*}
    R(\mu) &= \frac{1}{2}\int (x-\mu)^2P(dx).
\end{align*}
Since one does not know the true data generating measure $P$, it is standard to minimize the empirical risk based on independent and identically distributed samples $x=(x_1,\dots,x_n)$, 
\begin{align*}
    \ell(\mu\mid x) &=  \frac{1}{2n}\sum_{i=1}^n(x_i-\mu)^2.
\end{align*}
Inference for $\mu$ can be performed without assumptions about $P$ by defining a Gibbs posterior using the empirical risk \citep{martin2022direct}. Adopting a uniform prior, \eqref{eq:gibbs} becomes
\begin{align*}
    \pi_{\eta_\mu}(\mu\mid x) &\propto \exp\bigg\{-\frac{n\eta_\mu}{2n}\sum_{i=1}^n(x_i-\mu)^2\bigg\} \propto N\bigg(\mu;\frac{1}{n}\sum_{i=1}^nx_i, \frac{1}{n\eta_{\mu}} \bigg).
\end{align*}
The Gibbs posterior is a normal distribution centered at the sample mean and $n\eta_\mu$ is the posterior precision. Equal tailed credible intervals for $\mu$ will be centered at the sample mean and can be made larger or smaller by decreasing or increasing $\eta_\mu$. Now consider estimating the variance $\sigma^2=\text{var}(X)$ conditional on $\mu$ by minimizing 
\begin{align*}
    R(\sigma^2\mid \mu) = \frac{1}{2}\int \{\sigma^2-(x-\mu)^2\}^2P(dx).
\end{align*}

\begin{table}
\begin{tabular}{lcccc}
& N$(0,1)$ & $\text{t}_5(0,1)$ & S-N$(0,1,1)$ & Gumbel$(0,1)$ \\[5pt]
Joint Gibbs & 60 & 54 & 35 & 0 \\
Sequential Gibbs & 95 & 95 & 95 & 95 \\
\end{tabular}
\label{tablelabel}
\vspace*{1em}
\caption{\textit{Estimated coverage of $95\%$ credible intervals for $\mu$}. Coverage of credible intervals for $\mu$ after tuning $\eta$ so $95\%$ credible intervals for $\sigma^2$ had $95\%$ coverage. S-N denotes the Skew-Normal distribution.}
\label{table:coverage}
\end{table}

\noindent This risk is minimized by $E(X^2) +\mu^2 - 2E(X)\mu$, which is equal to the variance if $\mu=E(X)$. 
The Gibbs posterior defined by the empirical risk is
\begin{align*}
    \pi_{\eta_{\sigma^2}}(\sigma^2\mid x,\mu) &\propto \exp\bigg[-\frac{n\eta_{\sigma^2}}{2n}\sum_{i=1}^n\{\sigma^2-(x_i-\mu)^2\}^2\bigg] \propto N_{(0,\infty)}\bigg\{\sigma^2; \frac{1}{n}\sum_{i=1}^n(x_i-\mu)^2, \frac{1}{n\eta_{\sigma^2}}\bigg\}.
\end{align*}
If $\mu$ is the sample mean, then the mode of this distribution is the sample variance. As before, $\eta_{\sigma^2}$ acts as a precision parameter that can be used to control the width of credible intervals. These two Gibbs posteriors can be used separately for coherent Bayesian inference on the mean and variance, and can be tuned so credible intervals have correct coverage for a wide array of distributions. However, problems arise in performing joint inference on both parameters with a single Gibbs posterior. Inducing a joint posterior over $(\mu, \sigma^2)$ with \eqref{eq:gibbs} requires defining a combined loss, which fixes the scale of one parameter relative to the other, resulting in poor coverage for at least one parameter in many situations. For example, summing the two losses leads to the Gibbs posterior
\begin{align*}
    \pi_\eta(\mu, \sigma^2\mid x) &\propto \exp\bigg(-\frac{n\eta}{2n}\sum_{i=1}^n[(x_i-\mu)^2 + \{\sigma^2-(x_i-\mu)^2\}^2]\bigg),
\end{align*}
which is not a recognizable distribution, but can be sampled via Metropolis-Hastings. From \eqref{eq:gibbs}, $\eta$ controls dispersion for both $\mu$ and $\sigma^2$. \Cref{table:coverage} highlights the catastrophically poor coverage of credible intervals for $\mu$ after tuning $\eta$ so $95\%$ credible intervals for $\sigma^2$ have $95\%$ coverage. Details on tuning these posteriors are in the supplement.

Motivated by this shortcoming, we propose to avoid combining the risks into a single loss function and instead base inference on the unique joint distribution defined by conditional Gibbs posteriors for each loss:
\begin{align*}
    \pi_{\eta_\mu, \eta_{\sigma^2}}(\mu,\sigma^2\mid x) &= \pi_{\eta_\mu}(\mu\mid x)\pi_{\eta_{\sigma^2}}(\sigma^2\mid x,\mu) \\
    &=N\bigg(\mu;\frac{1}{n}\sum_{i=1}^nx_i, \frac{1}{n\eta_\mu} \bigg)N_{(0,\infty)}\bigg\{\sigma^2; \frac{1}{n}\sum_{i=1}^n(x_i-\mu)^2, \frac{1}{n\eta_{\sigma^2}}\bigg\}.
\end{align*}
Importantly, the hyperparameters $\eta_\mu$ and $\eta_{\sigma^2}$ can be tuned to ensure good coverage for both parameters across a variety of distributions for $x$ (\Cref{table:coverage}). In the next section we formalize our sequential Gibbs posterior construction.

\subsection{The Sequential Posterior}

Our goal is to perform inference on $J$ parameters $\theta_j\in\mathcal{M}_j$, $j\in [J]=1,\ldots,J$, connected to observed $\mathcal{X}$-valued data $x=(x_1,\ldots,x_n)\in\mathcal{X}^n$ by a sequence of real-valued loss functions,
\begin{align*}
 \ell_j^{(n)}:\mathcal{M}_j\times \mathcal{X}^n\times \mathcal{M}_{<j}\to\mathbb{R}, 
\end{align*}
where $\mathcal{X}$ is an arbitrary set, $\mathcal{M}_j$ is a manifold corresponding to the parameter space for $\theta_j$, and $\mathcal{M}_{<j}=\otimes_{k=1}^{j-1}\mathcal{M}_k$. All manifolds in this work are assumed smooth and orientable. Orientability ensures the existence of a volume form which serves as our default reference measure\footnote{In particular, all densities are implicitly with respect to the volume form.} and is equivalent to Lebesgue measure in the Euclidean setting. The $j$th loss measures congruence between $\theta_j$ and the data conditional on $\theta_{<j}=(\theta_1,\dots,\theta_{j-1})$. By allowing parameters restricted to manifolds, we encompass both unrestricted real-valued parameters and more complex settings, such as in principal component analysis when orthogonality constraints are included. This setup is broad and includes supervised and unsupervised loss functions. Our general results require neither independent, identically distributed data, nor assumptions of model correctness. \textcolor{black}{In an effort to make our theory broadly applicable, we do not place direct assumptions on how losses depend on the data $x$. This is more general than the typical Gibbs posterior framework which requires the loss to be additive as a function of the data \citep{bissiri2016general}.}

\begin{example}[Multi-scale inference]\label{ex:multiscale}
It is often useful to study data at different levels of granularity, such as decomposing a temperature distribution into a global component, a regional component, and local variation. Practical problems often occur when fitting these models jointly, as it is possible for the fine-scale component to explain the data arbitrarily well. To resolve this, a sequence of losses can be defined estimating first the coarse scale, then the medium scale conditional on the coarse scale, and so on \citep{fox2012multiresolution, nychka2015multiresolution, katzfuss2017multi, peruzzi2018bayesian}. For example, let $h_1>\cdots>h_J>0$ be a set of decreasing bandwidths and $f_j$ be mean zero Gaussian processes with kernels $K_j(x,x')=\exp\{-(x-x')/h_j\}$, $j\in[J]$. At the coarsest scale, we may model $y=f_1(x)+\varepsilon_1$ where $y$ is a response, $x$ is a feature, and $\varepsilon_1\sim N(0,\sigma_1^2)$ are errors. The negative log-likelihood defines a loss for $f_1$. Conditional on $f_1$, we model $y-f_1(x)=f_2(x)+\varepsilon_2$ with errors $\varepsilon_2\sim N(0,\sigma_2^2)$; again the negative log-likelihood defines a loss for $f_2$. Proceeding sequentially, we obtain losses for $f_{j}\mid f_1,...f_{j-1}$.  Similar decompositions occur broadly within spatial statistics, time series analysis, image analysis, tree-based models, and hierarchical clustering.
\end{example}

\begin{example}[Matrix/tensor factorization]\label{ex:matrix}
It is routine to decompose matrices and tensors as a sum of low-rank components, as in principal component analysis. These models are often fit by recursively finding and then subtracting the best rank $1$ approximation, defining a sequence of losses depending on previously estimated parameters \citep{wold1966estimation, lee1999learning, ng2001spectral, kolda2009tensor, hout2013multidimensional}. For example, let $X$ be a $k$-tensor of dimension $d_1\times\cdots\times d_k$ and consider fitting a rank $J$ approximation by iteratively finding and subtracting $J$ rank $1$ approximations. The best rank $1$ approximation minimizes 
\begin{align*}
    \ell_1^{(n)}(x^{(1)}\mid X) &= \lVert X - \lambda^{(1)} x_1^{(1)}\otimes\cdots\otimes x_k^{(1)}\rVert^2
\end{align*}
where $\lambda^{(1)}\in\mathbb{R}$, $x_i^{(1)}\in\mathbb{R}^{d_i}$, $i=1,...,k$, and $x^{(1)}=\{\lambda^{(1)}, x_1^{(1)},...,x_k^{(1)}\}$. Letting $\hat X_1 = \lambda_1 x_1^{(1)}\otimes \cdots\otimes x_k^{(1)}$ be the reconstructed tensor, the next best rank $1$ approximation minimizes
\begin{align*}
    \ell_2^{(n)}(x^{(2)}\mid X, x^{(1)}) &= ||X - \hat X_1- \lambda^{(2)} x_1^{(2)}\otimes \cdots\otimes x_k^{(2)}||^2,
\end{align*}
and so on. Characterizing uncertainty can be difficult in these settings due to high dimensionality and manifold constraints such as orthogonality. 
\end{example}

We now define the sequential posterior.
\begin{definition}[The sequential Gibbs posterior]\label{def:seq_gibbs}
    Given losses $\ell_j^{(n)}$, priors $\Pi^{(0)}_j$ on $\mathcal{M}_j$, and precision hyperparameters $\eta_j>0$, $j\in[J]$, the sequential Gibbs posterior is
\begin{align}
    \Pi_\eta^{(n)}(d\theta_1,\dots,d\theta_J\mid x) &= \prod_{j=1}^{J}\frac{1}{z_j^{(n)}(x,\theta_{<j})}\exp\{-\eta_jn\ell_j^{(n)}(\theta_j\mid x, \theta_{<j})\} \Pi^{(0)}_j(d\theta_j),\label{eq:seq} \\
    z_j^{(n)}(x,\theta_{<j}) &=\int_{\mathcal{M}_j} \exp\{-\eta_jn\ell_j^{(n)}(\theta_j\mid x, \theta_{<j})\} \Pi^{(0)}_j(d\theta_j). \nonumber
\end{align}
All results in this work assume $z^{(n)}_j(x,\theta_{<j})<\infty$ for every $\theta_{<j}\in\mathcal{M}_{<j}$. \textcolor{black}{This holds whenever $\ell^{(n)}_j$ is uniformly bounded from below---for example, when $\ell^{(n)}_j\geq 0$, as is common in practice---since then $\exp\{-\eta_jn\ell_j^{(n)}(\theta_j\mid x, \theta_{<j})\}$ is bounded above, and hence $z^{(n)}_j(x,\theta_{<j})$ is finite.}
\end{definition}
\cite{bissiri2016general} consider all coherent generalizations of Bayes' rule for updating a prior based on \textcolor{black}{a data-additive loss} and derive \eqref{eq:gibbs} as the unique optimal decision-theoretic update. Our sequential Gibbs posterior is the unique joint distribution with Gibbs posteriors for each conditional $\theta_j\mid x,\theta_{<j}$, and therefore trivially retains the coherence, uniqueness, and optimality properties of \cite{bissiri2016general} for \textcolor{black}{data-additive losses}. 
\textcolor{black}{The sequential Gibbs posterior is not equivalent to using \eqref{eq:gibbs} with combined loss $\eta_1\ell_1^{(n)}+\cdots + \eta_J\ell_J^{(n)}$, as the normalizing constants have considerable influence on the joint distribution.}

\subsection{Large Sample Asymptotics}

We now study frequentist asymptotic properties of the Gibbs posterior \eqref{eq:gibbs} and sequential Gibbs posterior \eqref{eq:seq}. Current theory for \eqref{eq:gibbs} with Euclidean parameters provides sufficient conditions under which the posterior has a limiting Gaussian distribution \citep{miller2021asymptotic, martin2022direct}. Theorem \ref{thrm:bvm} extends these results, providing sufficient conditions for \eqref{eq:gibbs} to converge to a Gaussian distribution as $n\to\infty$ when parameters are supported on manifolds. Taking the loss as a negative log-likelihood, this provides new asymptotic theory for traditional Bayesian posteriors on non-Euclidean manifolds. Theorems \ref{thrm:concentration} and \ref{thrm:sequential} concern the sequential Gibbs posterior, \eqref{eq:seq}. \Cref{thrm:concentration} establishes concentration over general metric spaces, and \Cref{thrm:sequential} extends \Cref{thrm:bvm} to the sequential setting, providing sufficient conditions for \eqref{eq:seq} to converge to a Gaussian distribution as $n\to\infty$. Formalizing these notions requires several assumptions on the losses, their minima, and their limits. All proofs are in \Cref{sec:proofs} of the online supplement. Additional assumptions are also deferred to \Cref{sec:proofs} to minimize notation in the main text; all are manifold and/or sequential analogues of standard assumptions for Euclidean, non-sequential Gibbs posteriors \citep{miller2021asymptotic} and are often  simple to verify in practice.

In the following, \textcolor{black}{a chart $(U, \varphi)$ on a $p$-dimensional manifold $\mathcal{M}$ is an open $U\subseteq\mathcal{M}$ and a diffeomorphism $\varphi:U\to\varphi(U)\subseteq\mathbb{R}^p$. The support of a measure $\mu$ on $\mathcal{M}$ is $\mathrm{supp}(\mu)=\{\theta\in\mathcal{M} : \mu(U)>0\ \text{for all open neighborhoods}\ U\ \text{of}\ \theta\}$. The pushforward of $\mu$ by a measurable function $f:\mathcal{M}\to f(\mathcal{M})$ is the measure $f_\#\mu$ on $f(\mathcal{M})$ defined by $f_\# \mu(A)=\mu\{f^{-1}(A)\}$ for all measurable $A\subseteq f(\mathcal{M})$.} The total variation between measures $P$ and $Q$ is denoted $d_{TV}(P,Q)$. \textcolor{black}{Finally, $f'$ and $f''$ denote the first and second derivatives of $f$, respectively. A detailed description of derivatives on manifolds is provided in \Cref{sec:geometry} of the supplement.}

\begin{theorem}\label{thrm:bvm}
Let $\mathcal{M}$ be a manifold, let $\ell^{(n)}:\mathcal{M}\times\mathcal{X}^n\to\mathbb{R}$ be a sequence of functions converging almost surely to $\ell:\mathcal{M}\to\mathbb{R}$, and let $\Pi_\eta^{(n)}$ be the Gibbs posterior in \eqref{eq:gibbs}. Suppose Assumptions \ref{as:third} and \ref{as:hessian} hold and that $(U,\varphi)$ is any chart on $\mathcal{M}$ satisfying
\begin{enumerate}[label=(\alph*)]
\item There exists a $\phi^\star$ in the interior of a compact $K\subseteq U$ such that $\ell(\theta) > \ell(\phi^\star)$ for all $\theta\in U \cap (K\setminus\{\phi^\star\})$ and $\liminf_n\inf_{\theta\in U\setminus K}\{\ell^{(n)}(\theta)- \ell(\phi^\star)\} > 0$.

\item $\Pi^{(0)}$ has a density $\pi^{(0)}$ that is continuous and strictly positive at $\phi^\star$, and $\mathrm{supp}(\Pi^{(0)}) \subseteq U$.
\end{enumerate}
If there exists a sequence $\theta^{(n)}\to\phi^\star$ such that $(\ell^{(n)})'(\theta^{(n)}\mid x)=0$, then $\ell'(\phi^\star)=0$ and
\begin{align*}
	d_{TV}\{(\tau^{(n)}\circ \varphi)_\#\Pi^{(n)}_\eta, N(0, \eta^{-1}H_{\varphi}^{-1})\}\to 0
\end{align*}
almost surely, where $\tau^{(n)}(\tilde\theta)= \sqrt{n}\{\tilde\theta-\varphi(\theta^{(n)})\}$ and \textcolor{black}{$H_{\varphi} = (\ell\circ\varphi^{-1})''\{\varphi(\phi^\star)\}$}. 
\end{theorem}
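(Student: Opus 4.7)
My plan is to reduce the problem to a standard finite-dimensional Bernstein--von Mises argument by working inside the chart $(U,\varphi)$. Since $\mathrm{supp}(\Pi^{(0)})\subseteq U$, the posterior is entirely supported on $U$, so pushing forward by $\varphi$ yields a density on $\varphi(U)\subseteq\mathbb{R}^p$ (relative to Lebesgue measure) proportional to $\exp\{-\eta n\,\ell^{(n)}\!\circ\varphi^{-1}\}\cdot(\pi^{(0)}\!\circ\varphi^{-1})\cdot V$, where $V$ is the smooth, strictly positive Jacobian relating the volume form on $\mathcal{M}$ to Lebesgue measure in the chart. All subsequent arguments are made on the probability-one event $\{\ell^{(n)}\to\ell\}$.

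First I would prove $\ell'(\phi^\star)=0$. Assumption~\ref{as:third}, together with local uniform convergence of $\ell^{(n)}\!\circ\varphi^{-1}$ to $\ell\!\circ\varphi^{-1}$, gives equicontinuity of the first derivatives on a compact neighborhood of $\varphi(\phi^\star)$, so $(\ell^{(n)})'(\theta^{(n)})=0$ and $\theta^{(n)}\to\phi^\star$ force $\ell'(\phi^\star)=0$. Next, fix $s\in\mathbb{R}^p$, set $\tilde\theta=\varphi(\theta^{(n)})+s/\sqrt{n}$, and Taylor expand $\ell^{(n)}\!\circ\varphi^{-1}$ to second order at $\varphi(\theta^{(n)})$. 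The linear term vanishes, the quadratic term equals $\tfrac{1}{2n}s^{\top}H^{(n)}_\varphi s$ with $H^{(n)}_\varphi\to H_\varphi$ by Assumption~\ref{as:hessian}, and the remainder is $O(\|s\|^3/\sqrt{n})$ uniformly on compact sets by the third-derivative bound in Assumption~\ref{as:third}. Multiplying by $\eta n$ and using $\pi^{(0)}\!\circ\varphi^{-1}\to\pi^{(0)}(\phi^\star)>0$ together with $V(\varphi(\theta^{(n)}))\to V(\varphi(\phi^\star))$, I obtain pointwise convergence of the rescaled unnormalized density (after subtracting the $s$-free constant $\eta n\,\ell^{(n)}(\theta^{(n)})$) to $\pi^{(0)}(\phi^\star)\,V(\varphi(\phi^\star))\exp\{-\tfrac{\eta}{2}s^{\top}H_\varphi s\}$.

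To promote pointwise to total variation convergence, I would split $\varphi(U)$ into a compact neighborhood $K'\supseteq\varphi(K)$ of $\varphi(\phi^\star)$ and its complement. Inside $K'$, the uniform third-order Taylor remainder produces a Gaussian-integrable envelope, so Scheffé's lemma, with the normalizer handled by Laplace's method on $K'$, yields $L^1$ convergence of the rescaled densities to that of $N(0,\eta^{-1}H_\varphi^{-1})$. Outside $K'$, condition (a) gives $\liminf_n\inf_{\theta\in U\setminus K}\{\ell^{(n)}(\theta)-\ell(\phi^\star)\}\geq\delta>0$, so the unnormalized mass outside $K'$ decays like $e^{-\eta n\delta}$, which is negligible compared to the Laplace rate $n^{-p/2}$ of the normalizer; the posterior mass outside $K'$ therefore vanishes. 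Combining the two regions proves $d_{TV}\{(\tau^{(n)}\!\circ\varphi)_\#\Pi^{(n)}_\eta, N(0,\eta^{-1}H_\varphi^{-1})\}\to 0$ almost surely.

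The main technical obstacle is converting the intrinsic derivative bounds of Assumption~\ref{as:third} into uniform Euclidean bounds on $(\ell^{(n)}\!\circ\varphi^{-1})'''$ near $\varphi(\phi^\star)$, so that the Taylor remainder is uniformly controlled on windows of radius $O(\sqrt{n})$ in the rescaled coordinate $s$. Once this bookkeeping is in place, the argument is structurally identical to the Euclidean Bernstein--von Mises proof of \cite{miller2021asymptotic} applied in chart coordinates; the chart itself plays no role beyond providing a linearization on which Laplace's method and Scheffé's lemma operate.
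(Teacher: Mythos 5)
Your proposal follows the same overall route as the paper: push the posterior forward through the chart $\varphi$, noting that $\mathrm{supp}(\Pi^{(0)})\subseteq U$ makes this a genuine density on $\varphi(U)$, and then run a Euclidean Bernstein--von Mises argument on the pushed-forward objects. The one structural difference is that where you re-derive the Euclidean result from scratch (second-order Taylor expansion around $\varphi(\theta^{(n)})$, uniform third-derivative remainder control, Laplace's method on a compact window plus Scheff\'e's lemma inside, and the well-separation condition (a) to kill the exterior mass), the paper instead carefully verifies that the pushed-forward objects $\tilde\ell^{(n)}$, $\tilde\ell$, $\tilde K$, $\tilde E$, $\tilde\pi^{(0)}$ satisfy the hypotheses of Theorem 5 of \cite{miller2021asymptotic} and then cites that result wholesale; your sketch is in effect unpacking what that citation contains. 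You also correctly identify the one place where the argument is not purely Euclidean bookkeeping: converting the chart-dependent third-derivative bound of Assumption~\ref{as:third} and the critical-point Hessian of Assumption~\ref{as:hessian} into well-defined conditions in the chosen chart $\varphi$. In the paper this is isolated as \Cref{lem:invariant}, which shows the bounded-third-derivative condition and the positive-definite Hessian at a critical point are chart-invariant and, as a byproduct, delivers $\ell'(\phi^\star)=0$ from $(\ell^{(n)})'(\theta^{(n)})=0$ and $\theta^{(n)}\to\phi^\star$; your appeal to equicontinuity of the first derivatives near $\varphi(\phi^\star)$ is the same mechanism, though you should be explicit that Assumption~\ref{as:third} only bounds third derivatives, so you need to first deduce local uniform convergence of $(\ell^{(n)}\circ\varphi^{-1})'$ and $(\ell^{(n)}\circ\varphi^{-1})''$ (as in Theorem 7 of \cite{miller2021asymptotic}) before the limit derivative is even defined. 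With that caveat, your argument is correct and equivalent in content to the paper's.
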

In \Cref{thrm:bvm}, $\theta^{(n)}$ minimizes the finite sample loss and is mapped to Euclidean space via $\varphi$ to obtain $\tilde\theta^{(n)}=\varphi(\theta^{(n)})$. Samples from $\theta\sim\pi_\eta^{(n)}$ are mapped to Euclidean space to produce $\tilde\theta=\varphi(\theta)$, centered by subtracting $\tilde\theta^{(n)}$, then scaled by $\sqrt{n}$. Asymptotically, this results in samples from a Gaussian distribution $\sqrt{n}(\tilde\theta-\tilde\theta^{(n)})\approx N(0,\eta^{-1}H_\varphi^{-1})$. The total variation distance between these distributions vanishes almost surely \citep{miller2021asymptotic}, which is stronger than the usual guarantees in probability \citep{martin2022direct}. The covariance of the limiting Gaussian is $\eta^{-1}H_\varphi^{-1}$, where $H_\varphi$ is the Hessian of $\ell\circ\varphi^{-1}$ evaluated at $\varphi(\phi^\star)$.

\textcolor{black}{Assumptions (a) and (b) are the only chart-dependent conditions in \Cref{thrm:bvm}. Assumption (a) introduces the local minimizer $\phi^\star$ of $\ell$; since it is in the interior of $K$, \Cref{thrm:bvm} applies to manifolds with or without boundary. Assumption (b) implies $\mathrm{supp}(\Pi^{(n)}_\eta)\subseteq U$, and therefore $(\tau^{(n)}\circ\varphi)_{\#}\Pi^{(n)}_\eta$ is a valid probability distribution for all $n$. Assumption (b) is less restrictive than it may appear because, by \Cref{thrm:concentration} when $J=1$ or, equivalently, by \citet[Theorem 3]{miller2021asymptotic}, the posteriors $\Pi^{(n)}_\eta$ concentrate around $\phi^\star$ asymptotically for any prior whose support contains $\phi^\star$. In \Cref{sec:bvm_proof} we show that a result similar to \Cref{thrm:bvm} holds without restrictions on the support of the prior. Specifically, under the assumptions of \Cref{thrm:bvm}, but with $\mathrm{supp}(\Pi^{(0)})\subseteq U$ replaced by $\varphi(U)=\mathbb{R}^p$, we prove that
\begin{align}\label{eq:bvm_alternative}
    d_{TV}\{\Pi^{(n)}, (\tau^{(n)}\circ\varphi)^{-1}_\#N(0,H_\varphi^{-1})\} &\to 0
\end{align}
almost surely. This result applies to a wide range of charts: For example, every open, convex subset of $\mathbb{R}^p$ is diffeomorphic to $\mathbb{R}^p$. Thus, if $(U,\varphi)$ is a chart with $\varphi(U)$ a (necessarily open) convex subset of $\mathbb{R}^p$, we can compose $\varphi$ with a diffeomorphism $\psi$ such that $\psi\{\varphi(U)\}=\mathbb{R}^p$. The resulting chart, $(U,\psi\circ\varphi)$, then satisfies the requisite condition.} Assumptions \ref{as:third} and \ref{as:hessian} are standard and amount to control over third derivatives and positive-definiteness of $\ell''(\phi^\star)$, respectively. \Cref{lem:invariant} guarantees both assumptions can be verified using basic calculus in any chart containing $\phi^\star$; no additional differential geometry is required.

We emphasize that \Cref{thrm:bvm} applies to any density that can be written as a Gibbs posterior, including likelihood-based posteriors. Posteriors over manifolds arise in a diverse array of applications, including covariance modelling (positive semidefinite matrices), linear dimensionality reduction (Grassmann manifold), directional statistics (spheres and Stiefel manifolds), and shape analysis (Kendall's shape space) \citep{schmidler2007fast, holbrook2016bayesian, elvira2017bayesian, hernandez2017general, lin2017bayesian, pourzanjani2021bayesian, thomas2022learning}. Despite this interest, to our knowledge, there is no Bernstein-von Mises theorem on manifolds, even for parameters on spheres. Existing asymptotic literature focuses on specific estimates such as the Frechet mean or M-estimators, and provides much weaker guarantees than total variation convergence of the entire posterior to a Gaussian \citep{kendall2011limit, bhattacharya2014statistics, eltzner2019smeary, paindaveine2020inference, ciobotaru2022consistency}. We believe \Cref{thrm:bvm} is the first result providing intuition and frequentist justification for the limiting behaviour of this broad class of Bayesian models.

\textcolor{black}{There is nothing sequential about \Cref{thrm:bvm}, which is precisely \Cref{thrm:sequential} when $J=1$. The cases $J=1$ and $J>1$ are considered separately because \Cref{thrm:bvm} has simpler assumptions and, as detailed above, is of broad interest.}
\textcolor{black}{Theorems \ref{thrm:concentration} and \ref{thrm:sequential} are our sequential results. These require the following assumptions, which represent sequential extensions of certain assumptions in \Cref{thrm:bvm}.} 

\begin{assumption}\label{as:converge}
For all $j\in[J]$ there exist $\ell_j:\mathcal{M}_j\times \mathcal{M}_{<j}\to\mathbb{R}$ such that $\ell_j^{(n)}(\cdot\mid x,\theta_{<j})\to \ell_j(\cdot\mid \theta_{<j})$ almost surely for every $\theta_{<j}$.
\end{assumption}

\begin{assumption}\label{as:min}
For all $j\in[J]$ there exist $\theta_j^{(n)}:\mathcal{M}_{<j}\to \mathcal{M}_j$ and $\theta_j^\star:\mathcal{M}_{<j}\to\mathcal{M}_j$ satisfying $\ell_j^{(n)'}\{\theta_j^{(n)}(\theta_{<j})\mid x,\theta_{<j}\}=0$ almost surely and $\ell_j'\{\theta_j^\star(\theta_{<j})\mid\theta_{<j})=0$.
\end{assumption}

\begin{assumption}\label{as:seq_min}
\textcolor{black}{Let $\theta_j^\star$ be as in \Cref{as:min} and define $\phi_j^\star$ iteratively as follows: Set $\phi_1^\star=\theta_1^\star$, and, for $j>1$, set $\phi_j^\star=\theta_j^\star(\phi_{<j}^\star)$, where $\phi_{<j}^\star=(\phi_1^\star,\dots,\phi_{j-1}^\star)$. Assume each $\phi_j^\star$ is in the interior of a compact $K_j\subseteq\mathcal{M}_j$ and satisfies $\ell_j(\theta_j\mid\phi_{<j}^\star) > \ell_j(\phi_j^\star\mid\phi_{<j}^\star)$ for all $\theta_j\in K_j\setminus\{\phi_j^\star\}$.}
\end{assumption}

\Cref{as:converge} guarantees losses have non-degenerate limits and is satisfied, for example, if the losses are empirical risk functions, as the strong law of large numbers guarantees almost sure convergence to the true risk function. Assumptions \ref{as:min} and \ref{as:seq_min} introduce optimizers of the conditional losses; these are naturally functions of previously estimated parameters. \textcolor{black}{The $\phi_j^\star$ in \Cref{as:seq_min} are sequential minimizers of the limiting losses. Specifically, $\phi_1^\star$ is the unique minimizer of $\ell_1(\cdot)$ in $K_1$, $\phi_2^\star$ is the unique minimizer of $\ell_2(\cdot\mid\phi_1^\star)$ in $K_2$, 
and so on. As with \Cref{thrm:bvm}, the assumption that each $\phi_j^\star$ is in the interior of $K_j$, and hence of $\mathcal{M}_j$, implies that \Cref{thrm:sequential} holds for manifolds with or without boundary. The sequential minimizers can differ substantially from the values obtained by jointly minimizing the total loss. Our framework assumes that the statistical problem is fundamentally sequential.}

Fix metrics $d_j$ on $\mathcal{M}_j$. While \Cref{thrm:concentration} holds for various metrics on $\mathcal{M}$, including $\max\{d_1,...,d_J\}$ and $(d_1^p+\cdots+d_J^p)^{1/p}$ for $p\geq1$, we focus on $d = (d_1^2 + \cdots + d_J^2)^{1/2}$.

\begin{theorem}\label{thrm:concentration}
Suppose Assumptions \ref{as:converge}, \ref{as:min}, \ref{as:seq_min}, and \ref{as:concentration} hold and that $\Pi^{(0)}_j(N_{j,\epsilon})>0$ for all $\epsilon>0$, where $N_{j,\epsilon}=\{\theta_j:d_j(\theta_j,\phi^\star_j)<\epsilon\}$. Set $\phi^\star = (\phi_1^\star,\dots,\phi_J^\star)$ and $N_\epsilon=\{\theta:d(\theta,\phi^\star)<\epsilon\}$, with $d = (d_1^2 + \cdots + d_J^2)^{1/2}$ as above. Then $\Pi^{(n)}_\eta(N_\epsilon)\to 1$ almost surely for all $\eta$ and $\epsilon>0$.
\end{theorem}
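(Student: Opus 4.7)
The plan is to exploit the conditional factorization of the sequential Gibbs posterior \eqref{eq:seq} and induct on $J$. Since $d(\theta,\phi^\star)^2 = \sum_{j=1}^J d_j(\theta_j,\phi_j^\star)^2$, the product neighborhood $\prod_{j=1}^J N_{j,\epsilon/\sqrt{J}}$ is contained in $N_\epsilon$, so it suffices to show $\Pi^{(n)}_\eta\bigl(\prod_j N_{j,\epsilon'}\bigr)\to 1$ almost surely for every $\epsilon'>0$. The base case $J=1$ is precisely the standard (non-sequential) Gibbs posterior concentration statement on a manifold---the manifold analogue of \citet[Theorem 3]{miller2021asymptotic}---applied with loss $\ell_1^{(n)}$ and minimizer $\phi_1^\star=\theta_1^\star$. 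The hypothesis that $\Pi^{(0)}_1$ puts positive mass in every neighborhood of $\phi_1^\star$, together with Assumptions \ref{as:converge}, \ref{as:min}, \ref{as:seq_min}, and \ref{as:concentration} restricted to $j=1$, supplies what is needed.

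For the inductive step, suppose the conclusion holds for problems with fewer than $J$ stages. Using the conditional factorization in \eqref{eq:seq},
\begin{align*}
\Pi^{(n)}_\eta\Bigl(\prod_{j\leq J}N_{j,\epsilon'}\Bigr) \ \geq\ \Pi^{(n)}_\eta\Bigl(\prod_{j<J}N_{j,\epsilon'}\Bigr)\cdot\inf_{\theta_{<J}\in\prod_{j<J}N_{j,\epsilon'}}\Pi^{(n)}_{J,\eta_J}(N_{J,\epsilon'}\mid x,\theta_{<J}).
\end{align*}
The first factor is the marginal probability under the sequential posterior on the first $J-1$ stages, which tends to $1$ almost surely by the inductive hypothesis. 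For the second, we need a \emph{uniform} concentration statement: the $J$th conditional Gibbs posterior must place mass arbitrarily close to $1$ on $N_{J,\epsilon'}$ uniformly for $\theta_{<J}$ in a sufficiently small neighborhood of $\phi_{<J}^\star$.

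The uniform concentration is the main technical obstacle. The non-sequential argument bounds $\Pi^{(n)}_{J,\eta_J}(N_{J,\epsilon'}^c\mid x,\theta_{<J})$ by controlling the numerator $\int_{N_{J,\epsilon'}^c}\exp\{-\eta_Jn\ell_J^{(n)}(\theta_J\mid x,\theta_{<J})\}\,\Pi^{(0)}_J(d\theta_J)$ from above via a positive separation of $\ell_J^{(n)}$ off $N_{J,\epsilon'}$, and the denominator $z_J^{(n)}(x,\theta_{<J})$ from below via prior mass in a small neighborhood of the conditional minimizer $\theta_J^\star(\theta_{<J})$ together with local regularity of the loss there. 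Promoting both bounds to be uniform in $\theta_{<J}$ requires continuity of $\theta_{<J}\mapsto\theta_J^\star(\theta_{<J})$ at $\phi_{<J}^\star$ (so that $\theta_J^\star(\theta_{<J})$ stays close to $\phi_J^\star$ and prior mass near $\phi_J^\star$ can stand in for prior mass near $\theta_J^\star(\theta_{<J})$), a uniform lower bound on $\liminf_n\inf_{\theta_J\notin N_{J,\epsilon'}}\{\ell_J^{(n)}(\theta_J\mid x,\theta_{<J})-\ell_J(\phi_J^\star\mid\phi_{<J}^\star)\}$, and a corresponding uniform upper bound for $\ell_J^{(n)}$ near $\phi_J^\star$. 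These are precisely the uniformity conditions Assumption \ref{as:concentration} is designed to provide; since Assumption \ref{as:seq_min} places $\phi_{<J}^\star$ in the interior of the compact $\prod_{j<J}K_j$, we may shrink $\epsilon'$ so that $\prod_{j<J}\overline{N_{j,\epsilon'}}$ is compact and contained in this set. Inserting suprema over this compact neighborhood into the standard concentration argument then yields the required uniform bound and closes the induction.
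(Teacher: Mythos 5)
Your strategy---induct on $J$, exploit the conditional factorization, and reduce the inductive step to a uniform concentration of the $J$th conditional posterior---is the same as the paper's, and you correctly identify Assumption~\ref{as:concentration} as the engine for the uniformity. The decompositions differ, and yours has a quantifier gap. The paper passes to the complement, uses a union bound $\Pi^{(n)}(N_\epsilon^c)\leq\Pi^{(n)}_{<J}(N_{<J,\epsilon'}^c)+\Pi^{(n)}(\mathcal{M}_{<J}\times N_{J,\epsilon'}^c)$, and then splits the second term by whether $\theta_{<J}\in N_{<J,\delta}$, with $\delta$ the constant supplied by Assumption~\ref{as:concentration}(c). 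You instead lower-bound $\Pi^{(n)}\bigl(\prod_j N_{j,\epsilon'}\bigr)$ by a product whose second factor is an infimum over $\theta_{<J}\in\prod_{j<J}N_{j,\epsilon'}$. But Assumption~\ref{as:concentration}(c) only gives the required uniform separation of the $J$th loss for $\theta_{<J}\in N_{<J,\delta}$, where $\delta$ depends on the radius $\epsilon_J$ demanded for the $J$th coordinate, and there is no guarantee that $\delta\geq\epsilon'\sqrt{J-1}$; so $\prod_{j<J}N_{j,\epsilon'}$ need not sit inside $N_{<J,\delta}$.

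Your proposed remedy---``shrink $\epsilon'$''---is circular if $\epsilon'$ is the common radius in all coordinates, because shrinking $\epsilon_J$ forces you to re-invoke Assumption~\ref{as:concentration}(c) and obtain a new, possibly smaller $\delta$. Tying $\epsilon'$ to the compacts $K_j$ of Assumption~\ref{as:seq_min} also misses the point: $K_j$ governs separation of $\phi_j^\star$ within $\mathcal{M}_j$, not uniformity over the conditioning variables $\theta_{<J}$. The repair is to decouple the radii: fix $\epsilon_J=\epsilon/\sqrt{J}$, apply Assumption~\ref{as:concentration}(c) with this $\epsilon_J$ to obtain $\delta>0$, and then take $\epsilon_j<\min\{\epsilon/\sqrt{J},\,\delta/\sqrt{J-1}\}$ for $j<J$. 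Then $\prod_j N_{j,\epsilon_j}\subseteq N_\epsilon$, the first factor tends to one by the inductive hypothesis because $\prod_{j<J}N_{j,\epsilon_j}\supseteq N_{<J,\min_j\epsilon_j}$, and $\prod_{j<J}N_{j,\epsilon_j}\subseteq N_{<J,\delta}$ so the uniform concentration applies and the infimum tends to one. The paper's union-bound decomposition avoids this bookkeeping entirely: its $\delta$-split lives inside an integral over $\theta_{<J}$ and never constrains the radius of the target credible set, with the contribution from $\theta_{<J}\notin N_{<J,\delta}$ absorbed by the inductive hypothesis.
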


\Cref{thrm:concentration} ensures samples from the sequential Gibbs posterior concentrate around the point $\phi^\star$. The proof relies on additional regularity conditions, namely continuity (parts (a) and (b) of \Cref{as:concentration}) and well-separated minimizers (part (c) of \Cref{as:concentration}).

We now present the sequential analogue of \Cref{thrm:bvm}.

\begin{theorem}\label{thrm:sequential}
Suppose Assumptions \ref{as:converge}, \ref{as:min}, \ref{as:seq_min}, \ref{as:third_seq}, and \ref{as:hess_seq} hold, and let $\Pi_{\eta}^{(n)}$ be the sequential Gibbs posterior in \eqref{eq:seq}. For all $j\in[J]$, assume that $(U_j,\varphi_j)$ is a chart on $\mathcal{M}_j$ containing $\phi_j^\star$ and satisfying Assumptions \ref{as:min_sep_seq}-\ref{as:loss_seq}. If each \textcolor{black}{$\Pi^{(0)}_j$ has a density $\pi^{(0)}_j$ that is continuous and strictly positive at $\phi_j^\star$, and $\mathrm{supp}(\Pi_j^{(0)}) \subseteq U_j$}, then
\begin{align*}
    (\tau^{(n)}\circ \varphi)_\# \Pi^{(n)}_\eta\to \prod_{j=1}^J N(0,\eta_j^{-1}H_{\varphi_j}^{-1})
\end{align*}
setwise, where \textcolor{black}{$H_{\varphi_j}=\{\ell_j(\cdot\mid \phi_{<j}^\star)\circ\varphi_j^{-1}\}''\{\varphi_j(\phi_j^\star)\}$}. Here $\varphi=(\varphi_1,\dots,\varphi_J):\otimes_{j=1}^J U_j\to\otimes_{j=1}^J\mathbb{R}^{p_j}$ and, setting $\tilde \theta_j^{(n)}(\theta_{<j}) = \varphi_{j}\{\theta_j^{(n)}(\theta_{<j})\}$, $\tau^{(n)}:\otimes_{j=1}^J\mathbb{R}^{p_j}\to \otimes_{j=1}^J\mathbb{R}^{p_j}$ is defined by
    \begin{align*}
        \tau^{(n)}(\tilde \theta) = \sqrt{n}\left\{\tilde\theta_1-\tilde\theta_1^{(n)},\tilde\theta_2-\tilde\theta_2^{(n)}(\theta_1),\dots,\tilde\theta_J-\tilde\theta^{(n)}_J(\theta_{<J})\right\}.
    \end{align*}
\end{theorem}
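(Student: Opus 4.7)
The plan is to proceed by induction on $J$, combining the recursive factorization of the sequential posterior with concentration (\Cref{thrm:concentration}) and a uniform-in-conditioning-variable version of \Cref{thrm:bvm}. The base case $J=1$ is exactly \Cref{thrm:bvm}, whose total variation convergence trivially implies setwise convergence. For the inductive step I would first observe that integrating the $J$th factor of \eqref{eq:seq} against the volume form yields the constant $z_J^{(n)}/z_J^{(n)}=1$, so the marginal of $\theta_{<J}$ under $\Pi_\eta^{(n)}$ is itself a sequential Gibbs posterior of length $J-1$. The inductive hypothesis then delivers
\begin{align*}
(\tau_{<J}^{(n)}\circ\varphi_{<J})_{\#}\Pi^{(n)}_{<J}\to \prod_{j=1}^{J-1}N(0,\eta_j^{-1}H_{\varphi_j}^{-1})
\end{align*}
setwise, where the subscript $<J$ denotes the restriction of $\varphi$ and $\tau^{(n)}$ to the first $J-1$ coordinates.

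Next I would decompose the pushforward along the conditional structure. For any measurable rectangle $A=A_1\times\cdots\times A_J\subseteq\otimes_{j=1}^J\mathbb{R}^{p_j}$,
\begin{align*}
[(\tau^{(n)}\circ\varphi)_{\#}\Pi^{(n)}_\eta](A)=\int \mathbf{1}_{A_{<J}}\{\tau_{<J}^{(n)}\circ\varphi_{<J}(\theta_{<J})\}\,Q^{(n)}(A_J\mid\theta_{<J})\,d\Pi^{(n)}_{<J}(\theta_{<J}),
\end{align*}
where $Q^{(n)}(\cdot\mid\theta_{<J})$ is the conditional Gibbs posterior for $\theta_J$ given $\theta_{<J}$, pushed forward through $\varphi_J$ and centered/scaled by $\tau_J^{(n)}$. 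For each fixed $\theta_{<J}$, $Q^{(n)}(\cdot\mid\theta_{<J})$ is a Gibbs posterior of the type treated by \Cref{thrm:bvm}, applied to $\ell_J^{(n)}(\cdot\mid x,\theta_{<J})$ with limit $\ell_J(\cdot\mid\theta_{<J})$, under the conditions provided by Assumptions \ref{as:min_sep_seq}-\ref{as:loss_seq}.

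The crux of the argument is upgrading this pointwise convergence to a uniform statement over a neighborhood of $\phi_{<J}^\star$. Concretely, I would prove that there is an open neighborhood $V$ of $\phi_{<J}^\star$ such that
\begin{align*}
\sup_{\theta_{<J}\in V}\bigl|Q^{(n)}(A_J\mid\theta_{<J})-N(0,\eta_J^{-1}H_{\varphi_J}^{-1})(A_J)\bigr|\to 0
\end{align*}
almost surely. This requires revisiting the proof of \Cref{thrm:bvm} and verifying that the Laplace-type expansion around $\theta_J^{(n)}(\theta_{<J})$ is uniform in $\theta_{<J}\in V$. The ingredients are continuity of $\theta_J^{(n)}(\theta_{<J})$ and $\phi_J^\star(\theta_{<J})$ in $\theta_{<J}$, local uniform positive-definiteness of the Hessian $\{\ell_J(\cdot\mid\theta_{<J})\circ\varphi_J^{-1}\}''\{\varphi_J(\phi_J^\star(\theta_{<J}))\}$, and local uniform control of third derivatives; these are precisely the content of the sequential assumptions \ref{as:third_seq}, \ref{as:hess_seq}, \ref{as:min_sep_seq}, and \ref{as:loss_seq}. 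This uniform BvM is the main obstacle, as the non-uniform version requires only a single application of standard calculus while the uniform version requires patching compactness arguments across the conditioning variable.

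Finally I would combine the pieces. By \Cref{thrm:concentration}, $\Pi^{(n)}_{<J}(\varphi_{<J}^{-1}(V))\to 1$ almost surely, so the integral above concentrates on $\{\theta_{<J}\in \varphi_{<J}^{-1}(V)\}$. On that event, $Q^{(n)}(A_J\mid\theta_{<J})$ converges uniformly to $N(0,\eta_J^{-1}H_{\varphi_J}^{-1})(A_J)$. Replacing $Q^{(n)}(A_J\mid\theta_{<J})$ by this constant inside the integral introduces vanishing error, after which the inductive hypothesis gives
\begin{align*}
[(\tau^{(n)}\circ\varphi)_{\#}\Pi^{(n)}_\eta](A)\to \prod_{j=1}^{J-1}N(0,\eta_j^{-1}H_{\varphi_j}^{-1})(A_j)\cdot N(0,\eta_J^{-1}H_{\varphi_J}^{-1})(A_J).
\end{align*}
Since the limiting measure is a product measure, setwise convergence on the $\pi$-system of measurable rectangles extends to all measurable sets by a standard monotone class / $\pi$-$\lambda$ argument, completing the induction.
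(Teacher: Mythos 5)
Your proposal shares the same high-level skeleton as the paper's proof (induction on $J$, conditional decomposition, base case from \Cref{thrm:bvm}), but diverges at the key technical step, and that divergence opens a gap. You propose to establish a \emph{uniform} Bernstein--von Mises statement over a fixed open neighborhood $V$ of $\phi_{<J}^\star$ in the original parameter space, then combine with concentration from \Cref{thrm:concentration}. The difficulty is that Assumptions \ref{as:min_seq} and \ref{as:loss_seq} do not assert uniform convergence of minimizers or losses over a fixed neighborhood of $\phi_{<J}^\star$. Rather, they assert convergence along the $n$-dependent paths $\theta_{<j}=(\tau_{<j}^{(n)}\circ\varphi_{<j})^{-1}(\tilde\theta)$ for each fixed transformed coordinate $\tilde\theta$; these paths shrink toward $\phi_{<J}^\star$ at rate $n^{-1/2}$. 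Your uniform BvM over a fixed $V$ is a genuinely stronger claim and would require more than the paper's assumptions provide. You flag this step yourself as ``the main obstacle,'' and that instinct is correct.

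There is also a structural mismatch in your combination step. After pushing forward, the indicator $\mathbf{1}_{A_{<J}}\{\tau_{<J}^{(n)}\circ\varphi_{<J}(\theta_{<J})\}$ already restricts the $\theta_{<J}$ integration to an $O(n^{-1/2})$-neighborhood of the finite-sample minimizer, so the fixed-$V$ concentration from \Cref{thrm:concentration} is not the relevant mechanism here. What is actually needed is convergence of the conditional posteriors along these shrinking paths, which is precisely what the paper's assumptions are tailored to deliver. The paper then avoids any uniformity argument by (i) using \Cref{lem:pushforward} to write the pushforward density of the joint as a product of a conditional pushforward and the marginal pushforward, (ii) applying \Cref{thrm:bvm} to the conditional Gibbs posterior $\Pi_J^{(n)}\{\cdot\mid(\tau_{<J}^{(n)}\circ\varphi_{<J})^{-1}(\tilde\theta_{<J})\}$ for each fixed $\tilde\theta_{<J}$, which is where \ref{as:min_seq}--\ref{as:loss_seq} enter, and (iii) invoking Theorem~1 of Sethuraman (1961), which upgrades setwise convergence of the marginal plus pointwise setwise convergence of the conditionals to setwise convergence of the joint. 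Sethuraman's theorem is the black box that replaces your uniform BvM; it requires only pointwise convergence of the conditionals along the fixed-$\tilde\theta$ paths, so no uniformity argument is needed. Your final monotone class step to pass from rectangles to all Borel sets is fine, but it becomes unnecessary once Sethuraman's result is used.
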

When $J=2$, one samples $(\theta_1,\theta_2)$ by drawing $\theta_1\sim \pi_{\eta_1}^{(n)}$ and $\theta_2\mid\theta_1\sim \pi_{\eta_2}^{(n)}(\cdot\mid\theta_1)$. These are mapped to Euclidean space to obtain $\tilde \theta_1=\varphi_1(\theta_1)$ and $\tilde \theta_2=\varphi_2(\theta_2)$. The finite sample minimizers $\theta_1^{(n)}$ and $\theta_2^{(n)}(\theta_1)$ are computed and mapped to Euclidean space to obtain $\tilde\theta_1^{(n)}=\varphi_1(\theta_1^{(n)})$ and $\tilde\theta_2^{(n)}(\theta_1)=\varphi_2\{\theta_2^{(n)}(\theta_1)\}$. Centering and scaling gives $\sqrt{n}(\tilde\theta_1-\tilde\theta_1^{(n)})\approx N(0,\eta_1^{-1}H_{\varphi_1}^{-1})$ and $\sqrt{n}\{\tilde\theta_2-\tilde\theta_2^{(n)}(\theta_1)\}\approx N(0,\eta_2^{-1}H_{\varphi_2}^{-1})$. \textcolor{black}{Asymptotically, $\theta_1$ and $\theta_2$ are independent; intuitively this happens because $\theta_1$ concentrates at $\theta_1^\star$, so for large $n$ we have $\theta_2\mid\theta_1\approx \theta_2\mid\theta_1^\star$.} As in \Cref{thrm:bvm}, the limiting covariances are inverse Hessians of the losses evaluated at critical points, \textcolor{black}{and the assumption that $\mathrm{supp}(\Pi_j^{(0)}) \subseteq U_j$ can be replaced with $\varphi_j(U_j)=\mathbb{R}^{p_j}$, yielding
\begin{align*}
    \Pi^{(n)}_\eta\to \prod_{j=1}^J (\tau_j^{(n)}\circ \varphi_j)^{-1}_\# N(0,\eta_j^{-1}H_{\varphi_j}^{-1})
\end{align*}
setwise as $n\to\infty$.} Assumptions \ref{as:third_seq}-\ref{as:loss_seq} are natural extensions of those in \Cref{thrm:bvm}, with only Assumptions \ref{as:min_sep_seq}-\ref{as:loss_seq} depending on the specified charts; see \Cref{sec:proofs}. \textcolor{black}{The assumptions serve to guarantee uniform convergence of the relevant functions and minimizers.}

Theorems \ref{thrm:bvm} and \ref{thrm:sequential} highlight the role of $\eta$ as a precision parameter. The sequential Gibbs posterior has individual tuning parameters for each $\theta_j$ and therefore has greater flexibility. In the following subsection, we develop a practical algorithm to take advantage of this flexibility to tune the sequential posterior so that credible intervals for $\theta_j$ are approximately valid confidence intervals. \textcolor{black}{Perfect asymptotic coverage for all combinations of parameters is only possible when the limiting covariance has a sandwich form, which does not hold for general loss functions \citep{martin2022direct}. Instead, we focus on marginal coverage, which provides a reasonable notion of uncertainty for problems where existing probabilistic methods fall short.}

\subsection{Calibration}
We propose a bootstrap-based calibration algorithm for tuning the sequential Gibbs posterior so credible intervals have approximately valid frequentist coverage, without reliance on asymptotic results or strong parametric assumptions. \textcolor{black}{Our algorithm is suitable for problems where existing calibration methods are computationally infeasible or not applicable.} We are inspired by the general posterior calibration algorithm in \cite{syring2019calibrating}, which uses Monte Carlo within the bootstrap to estimate coverage of credible regions and iteratively updates $\eta$ to drive coverage to a desired value. Sampling the posterior over each bootstrap replicate at each iteration of the algorithm is computationally intensive, rendering this approach impractical for principal component analysis in moderate-to-high dimensions. In the sequential setting, the computational burden is compounded by the need to calibrate $J$ different hyperparameters. Motivated by this, we propose a new general calibration algorithm which matches the volume of credible regions to the volume of pre-calculated bootstrap confidence regions. Pre-calculating the volume of a confidence region avoids the need to sample within the bootstrap and dramatically reduces the computational burden of calibration. Calculating volumes on manifolds can be difficult; we avoid this by restricting credible/confidence regions to be \textcolor{black}{geodesic balls}, which reduces matching volumes to matching radii. 

We now outline the procedure for a Gibbs posterior with a single loss, dropping redundant subscripts for readability. Fix a distance $d$ on $\mathcal{M}=\mathcal{M}_1$ and let $N_r(\xi)=\{\theta\in\mathcal{M}\mid d(\theta,\xi)<r\}$ be the ball of radius $r$ around $\xi\in\mathcal{M}$. Let $\hat \phi(x)$ be the minimizer of $\ell^{(n)}(\cdot\mid x)$; we use this as a finite sample estimator of $\phi^\star$. The frequentist coverage of the ball $N_r\{\phi(x)\}$ is
\begin{align*}
    c(r)&= E_{x\sim P_x}(1[\phi^\star \in N_r\{\phi(x)\}])
\end{align*}
where $P_x$ is the sampling distribution of $n$ data points. Fix $\alpha\in(0,1)$. The radius $r^\star$ of a $100(1-\alpha)\%$ confidence ball satisfies
\begin{align*}
    r^\star &= \inf \{r>0\mid c(r) \geq 1-\alpha\}.
\end{align*}
We propose to choose $\eta$ so the Gibbs posterior assigns $100(1-\alpha)\%$ of its mass to $N_{r^\star}\{\phi(x)\}$, which would imply that the credible interval $N_{r^\star}\{\phi(x)\}$ has valid frequentist coverage. The probability mass the Gibbs posterior assigns to the confidence ball is
\begin{align*}
    m(\eta)&=E_{\theta\sim\pi_\eta} (1[\theta \in N_{r^\star}\{\phi(x)\}]),
\end{align*}
so calibrating the Gibbs posterior is equivalent to solving $m(\eta)=1-\alpha$. A solution $\eta_\alpha$ exists if $\Pi^{(0)}[N_{r^\star}\{\phi(x)\}] < 1-\alpha < 1$: this follows from the limits
\begin{align*}
    \lim_{\eta\to 0^+}m(\eta) = \Pi^{(0)}[N_{r^\star}\{\phi(x)\}], \quad \lim_{\eta\to \infty}m(\eta) = 1
\end{align*}
and the intermediate value theorem. One can calculate $\eta_\alpha$ with any suitable root finding method. In our experiments we estimate $m(\eta)$ via Monte Carlo and then use stochastic approximation \citep{robbins1951stochastic} to find $\eta_\alpha$. \textcolor{black}{Additional details, including the full algorithm, are in the online supplement.}

In practice we do not know $\phi^\star$ or the sampling distribution $P_x$, so $r^\star$ is unavailable. We overcome this by estimating the coverage function via the bootstrap,
\begin{align*}
    c(r) &\approx \frac{1}{B}\sum_{b=1}^B1[\phi(x) \in N_{r}\{\phi(x_b)\}],
\end{align*}
and then solving for $r^\star$ using this approximation. Here $x_b$ is a bootstrap replicate of $x$ and $B>0$ is an integer. Euclidean bootstrap confidence regions are known to have asymptotically correct coverage up to error terms of $O_p(1/n)$ under weak conditions \citep{hall2013bootstrap}, but these results are difficult to generalize to the case of balls on manifolds. In simulations we find this approximation produces well-calibrated Gibbs posteriors.

We calibrate the sequential posterior by applying the above procedure sequentially. Let $\hat \phi_j(x)\in\mathcal{M}_j$ be the point obtained by sequentially minimizing $\ell_1^{(n)}(\cdot\mid x),...,\ell_{j}^{(n)}(\cdot\mid x,\theta_{<j})$. The bootstrap is used to estimate the radii $\hat r_{j}$ of $100(1-\alpha)\%$ credible balls around $\hat \phi_j(x)$, $j\in[J]$. We tune $\eta_1$ so $\theta_1$ lies inside $N_{\hat r_1}\{\hat \phi_1(x)\}$ with probability $1-\alpha$; this parameter is then fixed and $\eta_2$ is tuned so $\theta_2$ lies inside $N_{\hat r_2}\{\hat \phi_2(x)\}$ with probability $1-\alpha$, and so on. In the next section we synthesize the above work on sequential posteriors, including asymptotic theory and finite sample tuning, to obtain a generalized posterior for principal component analysis.

\section{Application to Principal Component Analysis}\label{sec:pca}
\subsection{The Sequential Bingham Distribution}
Our sequential and manifold extensions to Gibbs posteriors are of broad interest, but were concretely motivated by principal component analysis. Recall principal component analysis projects high dimensional features $x_i\in\mathbb{R}^p$ to low dimensional scores $z_i\in\mathbb{R}^J$, $J<p$, contained in a plane $\mathcal{P}$. This defines $J$ new features, called components, which are linear combinations of the original $p$ features. Failure to characterize uncertainty in components and scores under represents uncertainty in downstream analysis.

Let $X\in\mathbb{R}^{n\times p}$ be a matrix of $n$ samples, centered so $x_1+\cdots+x_n=0$. The optimal plane $\hat{\mathcal{P}}$ minimizes the squared reconstruction error $||X-\mathcal{P}(X)||^2$, where $\mathcal{P}(X)$ is the projection of $X$ onto $\mathcal{P}$. It is well known that the leading unit eigenvectors $\{v^{(n)}_j\}_{j=1}^J$ of the empirical covariance $\hat\Sigma=X^TX/n$ form an orthonormal basis for $\hat{\mathcal{P}}$. These can be found by sequentially solving
\begin{align}
    v_j^{(n)}(v_{<j}) = \argmax_{v_j\in\mathbb{S}^{p-1}\cap\text{Null}\{v_1,...,v_{j-1}\}}v_j^T\hat\Sigma v_j,\quad j\in[J], \label{eq:PCA_loss1}
\end{align}
where $\text{Null}\{v_1,...,v_{j-1}\}$ is the null space of the span of $\{v_1,\dots,v_{j-1}\}$. The null space condition ensures eigenvectors are orthogonal; hence the matrix $\hat V\in\mathbb{R}^{p\times J}$ containing solutions of \eqref{eq:PCA_loss1} as columns is an element of the Stiefel manifold $\mathcal{V}(J,p)=\{V\in\mathbb{R}^{p\times J}\mid V^TV=I\}$. Computing charts on $\mathcal{V}(J,p)$ and sampling densities over $\mathcal{V}(J,p)$ is difficult. We instead use an equivalent formulation defined over spheres,
\begin{align}
    w_j^{(n)}(v_{<j}) = \argmax_{w_j\in\mathbb{S}^{p-j}} w_j^TN_{<j}^T\hat\Sigma N_{<j}w_j,\quad j\in[J], \label{eq:PCA_loss2}
\end{align}
where  $N_{<j}\in\mathbb{R}^{p\times p-j+1}$ is an orthonormal basis for $\text{Null}\{v_1,...,v_{j-1}\}$. The optimizer $w_j^{(n)}(v_{<j})$ is the leading eigenvector of $N_{<j}^T\hat\Sigma N_{<j}$ and is related to \eqref{eq:PCA_loss1} by $v_j^{(n)}(v_{<j})=N_{<j}w_j^{(n)}(v_{<j})$. 

Assuming uniform priors, our sequential posterior is 
\begin{align}
    \kappa_\eta^{(n)}(w\mid  x) &= \prod_{j=1}^J\frac{1}{z_{j}^{(n)}(w_{<j}\mid  x)}\exp(\eta_jn w_j^TN_{<j}^T\hat\Sigma N_{<j}w_j)\label{eq:pca_post}, \\
    \mbox{with}\quad z_{j}^{(n)}(w_{<j}\mid  x)&=\prescript{}{1}{F}_1\{1/2,(p-j)/2,\eta_jn N_{<j}^T\hat\Sigma N_{<j}\}, \nonumber
\end{align}
where $\prescript{}{1}{F}_1$ is the confluent hypergeometric function of matrix argument. This posterior is a product of Bingham distributions with concentration matrices $n\eta_jN_{<j}^T\hat\Sigma N_{<j}$. In \eqref{eq:pca_post}, $N_{<j}$ is computed using the samples $v_1,...,v_{j-1}$ which are found sequentially via the relations $v_j = N_{<j}w_j$. We write $\iota:\otimes_{j=1}^J\mathbb{S}^{p-j}\to\mathcal{V}(k,p)$ for the corresponding embedding $[w_1,...,w_J]\mapsto [v_1,...,v_J]$. The sequential Bingham distribution \eqref{eq:pca_post} can be used to sample posterior eigenvectors, providing a full characterization of uncertainty in components, scores, and any downstream inference involving these quantities. This can be done in isolation or jointly within a larger Bayesian model, which we illustrate in \Cref{sec:pcr}. 

Theorems \ref{thrm:concentration} and \ref{thrm:sequential} apply to \eqref{eq:pca_post}. To simplify presentation, we assume the data are centered with full-rank diagonal covariance; in this case the true components are $v_j^\star=e_j$, $j\in[J]$, where $e_j$ is the $j$th standard basis vector in $\mathbb{R}^p$. Samples from the sequential Gibbs posterior concentrate around the true eigenvectors, and centered/scaled samples converge to a Gaussian distribution with covariance proportional to the inverse eigengaps. A small technical detail is that \eqref{eq:pca_post} is antipodally symmetric, assigning equal mass to $\pm B$ for any measurable $B\subseteq\otimes_{j=1}^J\mathbb{S}^{p-j}$. We resolve this ambiguity by implicitly restricting the priors so $w\sim \pi_j$ implies $w_1>0$ almost surely.

\begin{proposition}\label{prop:pca_bvm}
    Assume $E(x)=0$ and $\text{var}(x)=\text{diag}(\lambda_1,...,\lambda_p)$ with $\lambda_1>\cdots>\lambda_p>0$. Fix charts $(U_j,\varphi_j)$ on $\mathbb{S}^{p-j}$ with $(1,0,...,0)\in U_j$, $j\in[J]$. Then $\iota(W)\to I_{p\times k}$ in probability where $W\sim\kappa_\eta^{(n)}$ and 
    \begin{align*}
    (\tau^{(n)}\circ\varphi)_\#\kappa_\eta^{(n)}\to \prod_{j=1}^JN\Big\{0,(2\eta_j)^{-1}H_{j}^{-1}\Big\}
    \end{align*}
    setwise, where $H_j^{-1}=\text{diag}\{(\lambda_j-\lambda_{j+1})^{-1},\ldots,(\lambda_j-\lambda_{p})^{-1}\}$ and $\tau^{(n)}$, $\varphi$ are as in \Cref{thrm:sequential}.
\end{proposition}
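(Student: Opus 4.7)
The plan is to apply \Cref{thrm:concentration} and \Cref{thrm:sequential} to the sequential Bingham posterior \eqref{eq:pca_post}. Matching \eqref{eq:pca_post} against the Gibbs form \eqref{eq:gibbs} yields the conditional losses $\ell_j^{(n)}(w_j\mid x, v_{<j}) = -w_j^T N_{<j}^T \hat\Sigma N_{<j} w_j$ with almost-sure limits $\ell_j(w_j\mid v_{<j}) = -w_j^T N_{<j}^T \Sigma N_{<j} w_j$, where $\Sigma = \text{diag}(\lambda_1,\ldots,\lambda_p)$; entrywise convergence $\hat\Sigma\to\Sigma$ via the strong law of large numbers verifies \Cref{as:converge}. \Cref{as:min} is immediate: the minimizers are the leading unit eigenvectors of $N_{<j}^T\hat\Sigma N_{<j}$ and $N_{<j}^T\Sigma N_{<j}$ respectively, which exist by compactness of $\mathbb{S}^{p-j}$.

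Next I would identify the sequential minimizers $\phi_j^\star$ inductively. For $j=1$, the leading eigenvector of $\Sigma$ is $e_1$ by the strict eigengap. Assuming $v_k^\star = e_k$ for $k<j$, Gram-Schmidt produces $N_{<j}^\star = [e_j\,|\,e_{j+1}\,|\cdots|\,e_p]$, so $N_{<j}^{\star T}\Sigma N_{<j}^\star = \text{diag}(\lambda_j,\ldots,\lambda_p)$ and its leading eigenvector in $\mathbb{S}^{p-j}$ is $\phi_j^\star = (1,0,\ldots,0)$, whence $v_j^\star = e_j$ and $\iota(\phi^\star) = I_{p\times J}$. The antipodal degeneracy $\ell_j(w_j) = \ell_j(-w_j)$ is broken by the prior restriction $w_{j,1}>0$ noted just before the proposition, so each $\phi_j^\star$ sits in the interior of a compact subset $K_j$ of the upper hemisphere, on which uniqueness of the minimizer follows from $\lambda_j>\lambda_{j+1}$; this verifies \Cref{as:seq_min}.

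For the Hessian I would use the graph chart $\varphi_j^{-1}(y) = (\sqrt{1-\|y\|^2},\, y)$ on the open upper hemisphere of $\mathbb{S}^{p-j}$. Evaluating at $\phi_{<j}^\star$, so that $N_{<j} = N_{<j}^\star$, the pulled-back limiting loss is
\begin{align*}
\ell_j(\cdot\mid\phi_{<j}^\star)\circ\varphi_j^{-1}(y)
\;=\; -\lambda_j(1-\|y\|^2) - \sum_{l=1}^{p-j}\lambda_{j+l}\,y_l^2
\;=\; -\lambda_j + \sum_{l=1}^{p-j}(\lambda_j - \lambda_{j+l})\,y_l^2.
\end{align*}
Differentiating twice at $y=0$ gives $H_{\varphi_j} = 2\,\text{diag}(\lambda_j-\lambda_{j+1},\ldots,\lambda_j-\lambda_p)$, positive definite by the strict eigengap. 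Hence $\eta_j^{-1} H_{\varphi_j}^{-1} = (2\eta_j)^{-1} H_j^{-1}$ with $H_j^{-1}$ as in the proposition, which delivers the stated marginal limiting covariances and accounts for the factor of $2$.

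What remains is to verify the more technical regularity hypotheses of \Cref{thrm:concentration} and \Cref{thrm:sequential} (bounded third derivatives, uniform convergence of losses and of minimizers on compact sets, positive-definiteness of the Hessian). Since each $\ell_j^{(n)}$ is quadratic in $w_j$ with coefficients that are polynomial in the entries of $\hat\Sigma$ and smooth in $v_{<j}$, the differentiability hypotheses are immediate, and uniform convergence on compacts reduces to entrywise convergence $\hat\Sigma\to\Sigma$. The main obstacle is the implicit smooth dependence of $N_{<j}$ on $v_{<j}$: $N_{<j}$ cannot be chosen globally smoothly, but on a neighborhood of $(e_1,\ldots,e_{j-1})$ on which $v_1,\ldots,v_{j-1}$ remain linearly independent, Gram-Schmidt (or the implicit function theorem) produces a smooth selection. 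Restricting the compacts $K_j$ to such a neighborhood preserves all the required regularity, after which \Cref{thrm:concentration} yields $\iota(W)\to I_{p\times J}$ in probability and \Cref{thrm:sequential} yields the setwise Gaussian limit.
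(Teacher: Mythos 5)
Your proposal is correct and follows essentially the same route as the paper's proof: identify the conditional quadratic losses, use the strong law to pass to the population covariance, locate the sequential minimizers $\phi_j^\star=(1,0,\ldots,0)$ with $N_{<j}^\star=[e_j\,|\cdots|\,e_p]$, compute the Hessian in the graph chart on the upper hemisphere, and defer the remaining regularity (bounded third derivatives, uniform convergence, well-separated minimizers) to smooth Gram--Schmidt selection of $N_{<j}$ near $\phi_{<j}^\star$ together with eigenvector perturbation theory. A minor point in your favor: you explicitly track the factor of $2$ from differentiating $u_i^2$ twice, which cleanly reconciles the Theorem~\ref{thrm:sequential} covariance $\eta_j^{-1}H_{\varphi_j}^{-1}$ with the $(2\eta_j)^{-1}H_j^{-1}$ in the proposition statement, whereas the paper leaves this bookkeeping implicit by defining $H_j$ without the $2$.
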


\Cref{prop:pca_bvm} views $\kappa_\eta^{(n)}$ as a density on $\otimes_{j=1}^J\mathbb{S}^{p-j}$. The charts $U_j\subseteq \mathbb{S}^{p-j}$ can be embedded in $\mathbb{S}^{p-1}$ via $w\to N_{<j}w$. For example, if $v_k=e_k$ for $k=1,...,j-1$, then $N_{<j}=[e_j,...,e_{J}]$ and $N_{<j}(1,0,...,0)^T=e_j$; hence the assumption $(1,0,...,0)\in U_j$ ensures that the $j$th eigenvector of $\text{var}(x)$ is in $N_{<j}U_{j}$. Any charts $(U_j,\varphi_j)$ with $(1,0,...,0)\in U_j$ can be used for \Cref{prop:pca_bvm}, such as $U_j=\{w\in\mathbb{S}^{p-j}\mid w_1>0\}$ and $\varphi_j(w)=w_{-1}$ where $w_{-1}\in\mathbb{R}^{p-j}$ is $w$ with the first entry removed. Other viable charts include the Riemannian logarithm, stereographic projection, or projective coordinates. See \Cref{sec:proofs} in the online supplement for the proof.

\begin{algorithm}[t]
\caption{Sampling from the sequential Bingham distribution}\label{alg:sample}
\KwData{Empirical covariance $\hat\Sigma=X^TX/n$ and positive hyperparameters $\eta_1,...,\eta_n$.}
\KwResult{$[v_1,...,v_j]\sim \kappa_\eta^{(n)}$}
\For{$j =1,...,J$}{
    $N_{<j}\gets \text{Null}\{v_1,...,v_{j-1}\}$\;
    $w_j\sim \text{Bing}(n\eta_jN_{<j}^T\hat\Sigma N_{<j})$; \\
    $v_j\gets N_{<j}w_j$; \\
}
\end{algorithm}

\subsection{Posterior Computation}

\textcolor{black}{Sampling from the sequential Gibbs posterior is straightforward if it is possible to generate exact samples from each of the conditional distributions, $\theta_j\mid \theta_{<j}$. Algorithm \ref{alg:sample} outlines this procedure for principal component analysis, where exact sampling is possible because the conditional Gibbs posteriors are Bingham distributions \citep{hoff2009simulation, kent2013new}.} Any algorithm which produces exact samples from a Bingham distribution, such as rejection sampling, can be combined with Algorithm \ref{alg:sample} to produce exact samples from \eqref{eq:pca_post}. Priors of the form $\pi^{(0)}_j(w_j)\propto \exp(Aw_j+b)$ can be accommodated by replacing the Bingham sampling step with a Fisher-Bingham sampling step \citep{hoff2009simulation}. The main bottleneck is computing $N_{<j}$. In high dimensions, $N_{<j}$ can be computed approximately \citep{nakatsukasa2022fast}, resulting in nearly orthogonal samples. 

\textcolor{black}{In general, sampling from the sequential Gibbs posterior may be complex due to the presence of $\theta_j$ in all normalizing constants $z_k^{(n)}(x,\theta_{<k})$, $k\geq j$. These constants are often intractable and do not cancel when calculating Metropolis-Hastings acceptance ratios. Ignoring the normalizing constants in calculating the acceptance ratio results in a Markov chain targeting a weighted version of the sequential posterior. This challenge also occurs with cut posteriors used in modular Bayesian analysis \citep{plummer2015cuts}. Multiple solutions have been developed for cut posteriors that are applicable for sampling sequential posteriors, including running intermediate chains until convergence, adjusting for bias during sampling using path-augmented proposal distributions, and adjusting for bias in functionals of interest after sampling using iterated expectations and coupled Markov chains \citep{jacob2020unbiased}. We explain these solutions in detail in Section S3.2 of the supplement, and provide practical guidance for problems where exact sampling is not feasible. Sampling over general manifolds introduces additional complexity typically requiring computationally intensive calculations of charts, geodesic maps, and associated Jacobians \citep{girolami2011riemann, betancourt2013general, wang2013adaptive}.}

\subsection{Simulations}
Uncertainty in eigenvectors depends on the marginal distributions of $X$ and $p/n$. We sample the rows of $X$ independently from a mean zero multivariate Gaussian or mean zero multivariate $t_5$ for each of the relative dimensions $p/n\in\{1/4,1/2,1\}$. All simulations fix $n=100$ and use a diagonal covariance where the first $k=5$ eigenvectors explain $90\%$ of the variance in the data. The first five eigenvalues are $(\lambda_1,\lambda_2,...,\lambda_5)=(10,9,...,6)$ and the remaining $p-5$ eigenvalues are linearly spaced and scaled to explain the remaining $10\%$ of the variance. We evaluate the coverage of multiple methods for estimating the first five eigenvectors. All credible/confidence balls are computed using the geodesic distance of samples to the mean or mode. Sampled eigenvectors are  identifiable up to right multiplication by an orthogonal matrix; we resolve this ambiguity by Procrustes aligning all samples to mean or mode prior to computing intervals.

The original Gibbs posterior and the Bayesian spiked covariance model \citep{jauch2020random} are the primary alternatives to the proposed method. The original Gibbs posterior uses $||X-XVV^T||^2$ as a loss function. We compute credible intervals around the mode and tune $\eta$ so the average radius of 95\% credible balls around each component matches the average bootstrapped radius. The Bayesian spiked covariance model assumes the likelihood $x_i\mid V, \Lambda, \sigma^2\sim N\{0, \sigma^2(V\Lambda V^T+I)\}$ with $V\in\mathcal{V}(k,p)$ the eigenvectors, $\Lambda$ a diagonal matrix of positive strictly decreasing eigenvalues, and $\sigma^2>0$ residual noise variance. Priors are chosen as $V\sim 1$, $\lambda_j\sim N(0,5^2)$, $j=1,...,p$, $\sigma^2\sim N(0,5^2)$. Samples are obtained using polar augmentation \citep{jauch2021monte} and Hamiltonian Monte Carlo in Stan \citep{carpenter2017stan}. Credible balls are computed around the mode (estimated with the sample that maximizes the log posterior density) and Frechet mean \citep{chakraborty2019statistics}. Coverage was estimated using 500 data replicates in all cases except the Joint Gibbs and Bayesian spiked covariance models when $p/n=1$, which use only $100$ replicates due to the extreme computational cost of sampling.

Table \ref{table:sims} shows the results. Credible regions around the mode for the Bayesian spiked covariance model have poor coverage in all cases. All other methods perform well when $X$ has Gaussian marginals, with the largest fault being over-coverage of components $4$ and $5$. When $X$ has $t_5$ marginals, the joint Gibbs model significantly under-covers the first two components, and the Bayesian spiked covariance model fails entirely. Both the sequential Gibbs posterior and the bootstrap provide excellent coverage independent of the marginals of $X$ and relative dimension.

\begin{table}
\begin{tabular}{lcccc}
& & $N(0,I)$ &  \\
$p/n$ & $1/4$ & $1/2$ & $1$  \\
Sequential Gibbs & (92, 93, 96, 99, 99) & (91, 94, 97, 98, 99) & (89, 95, 97, 99, 99) \\
Joint Gibbs  & (90, 90, 97, 96, 100) & (88, 93, 96, 96, 100) & (87, 94, 92, 95, 100)  \\
Bootstrap  & (93, 93, 97, 100, 98) & (93, 96, 97, 99, 99) & (91, 96, 98, 99, 100) \\
BPCA (mode)  & (13, 8, 8, 9, 5) & (14, 9, 12, 9, 4) & (16, 13, 14, 9, 3)  \\
BPCA (mean)  & (92, 95, 96, 96, 97) & (90, 95, 96, 98, 99) & (89, 94, 98, 99, 100)  \\[5pt]
& & $t_5(0,I)$ &  \\
$p/n$ & $1/4$ & $1/2$ & $1$ \\
Sequential Gibbs & (94, 89, 91, 94, 97) & (97, 92, 91, 90, 95) & (96, 90, 90, 90, 96) \\
Joint Gibbs  &  (71, 84, 88, 94, 99) & (64, 81, 91, 96, 100) & (63, 82, 79, 97, 100)  \\
Bootstrap  & (95, 89, 92, 94, 96) & (97, 93, 92, 90, 96) & (97, 89, 91, 91, 94)  \\
BPCA (mode)  & (56, 42, 32, 27, 26) & (75, 56, 48, 38, 33) & (86, 67, 62, 45, 40)  \\
BPCA (mean)  & (36, 49, 58, 61, 64) & (19, 31, 39, 46, 57) & (9, 20, 20, 38, 44) \\
\end{tabular}
\vspace*{1em}
\caption{
\textit{Coverage of $95\%$ intervals by component}. Coverage of confidence/credible balls for the first five eigenvectors under different marginal distributions and relative dimensions. BPCA denotes the Bayesian spiked covariance model.
}
\label{table:sims}
\end{table}

\section{Applications to Crime Data}\label{sec:crime}
\subsection{Visualizing Uncertainty}
We analyze the publicly available communities and crime dataset \citep{misc_communities_and_crime_183}, which contains socio-economic, law enforcement, and crime data for communities from the 1990 United States Census, the 1990 United States Law Enforcement Management and Administrative Statistics survey, and the 1995 Federal Bureau of Investigation Uniform Crime Report. We focus on $p=99$ numeric features including median family income, divorce rates, unemployment rates, vacancy rates, number of police officers per capita, and violent crime rate, all normalized to have unit variance. The goal is to identify groups of features predictive of higher violent crime. We applied principal component analysis to the centered/scaled data. Roughly, the first five components capture (1) income and family stability, (2) recent immigration and language barriers, (3) housing availability and occupancy, (4) youth prevalence and neighbourhood age, and (5) homelessness and poverty. These components explain 65\% of the variance in the data; the first 21 components explain 90\% of the variance. Additional information 
is in the supplement.

We subsample $n=100$ communities to illustrate key aspects of uncertainty characterization from \eqref{eq:pca_post}. Figure \ref{fig:scores} shows posterior scores colored by violent crime rate after calibrating with the bootstrap matching algorithm. The variance of the $j$th score vector $[z_{1j},...,z_{nj}]$ increases with $j$. This happens for two reasons. First, uncertainty from previously estimated components accumulates, resulting in higher uncertainty for later components. Second, the eigenvalues of later components are poorly separated compared to the eigenvalues of the first components; this makes it harder to disambiguate directions and results in larger variance, as expected from \Cref{prop:pca_bvm}. The online supplement contains further details on calibration.

\begin{figure}[t]
    \hspace*{-1.6cm}
    \centering
    {\includegraphics[width=18cm]{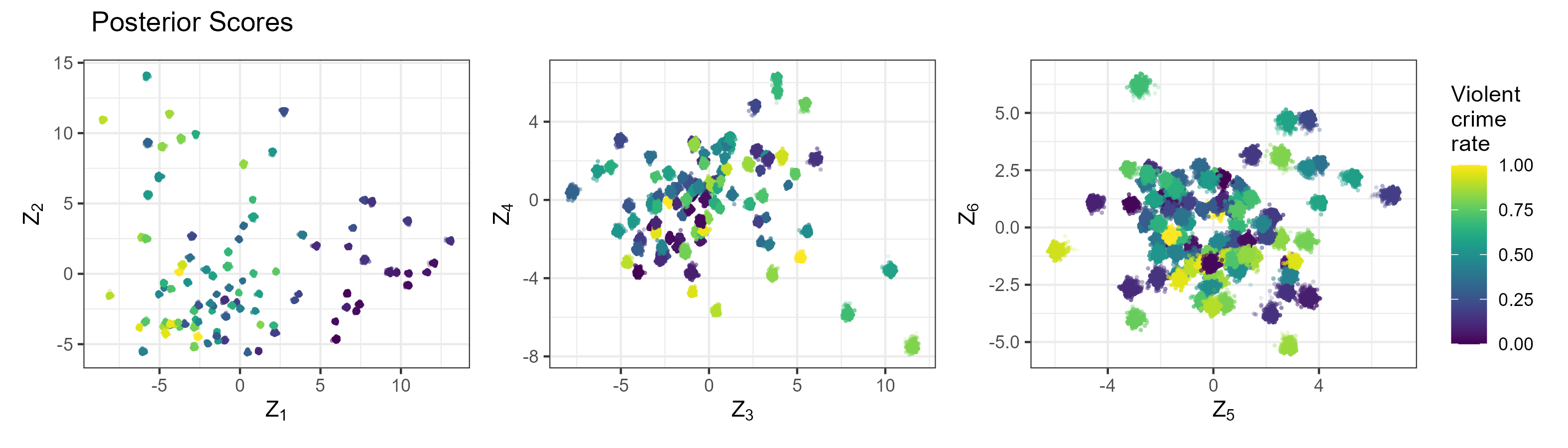}}
    \caption{Posterior scores sampled from the sequential Gibbs posterior, plotted in pairs to illustrate growing uncertainty with component index. Scores are colored by crime rate.}%
    \label{fig:scores}%
\end{figure}
\begin{figure}[t]
    \centering
    {\includegraphics[width=14cm]{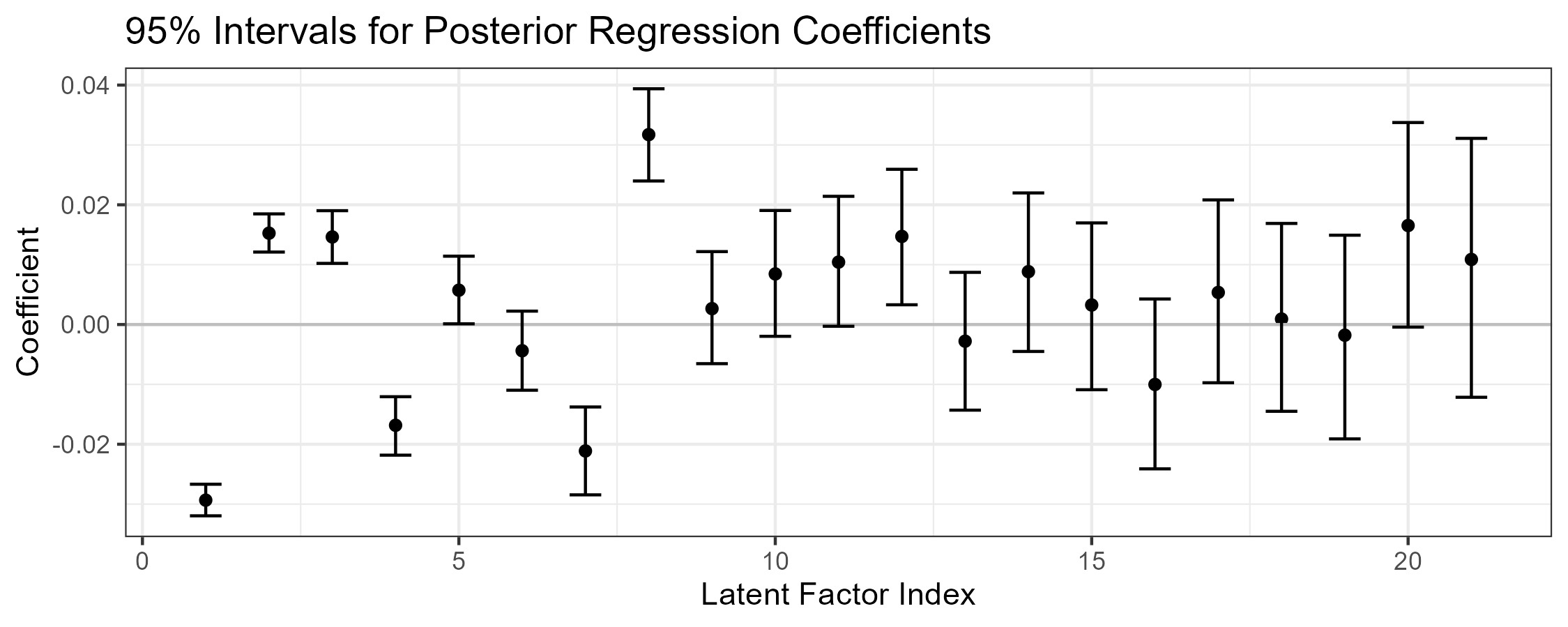}}
    \caption{Posterior credible intervals for coefficients in principal component regression using the sequential posterior. }%
    \label{fig:coeff}%
\end{figure}

\subsection{Principal Component Regression}\label{sec:pcr}
Principal component regression fits a linear model to scores, with $Y=XV\beta+\varepsilon$ where $Y\in\mathbb{R}^n$ is a centered response vector for $n$ individuals, $X\in\mathbb{R}^{n\times p}$ is a centered/scaled matrix of $p$-dimensional features, $V\in\mathcal{V}(J,p)$ are components, $\beta\in\mathbb{R}^J$ are coefficients, and $\varepsilon\in\mathbb{R}^n$ are errors. Adopting the distribution $\varepsilon\sim N(0,\sigma^2 I)$ induces a Gaussian likelihood $\pi(Y\mid V,\beta,\sigma^2)$. We apply our sequential framework to principal component regression, using \eqref{eq:PCA_loss2} for the first $J$ losses and the negative log-likelihood $-\log\{\pi(Y\mid V,\beta,\sigma^2)\}$ for the $J+1$st loss. The scale of the likelihood is well specified relative to priors, so we fix $\eta_{J+1}=1$. When the loss is a negative log-likelihood, \eqref{eq:gibbs} is exactly Bayes' rule. The sequential posterior is
\begin{align}
    \pi_{\eta}(V,\beta,\sigma^2\mid X, Y) &= \kappa_\eta^{(n)}(V\mid X)\pi_{\text{Bayes}}(\beta,\sigma^2\mid X, Y, V) \label{eq:pca_reg}
\end{align}
where $\pi_\text{Bayes}$ is the likelihood-based posterior for $\beta,\sigma^2\mid X,Y,V$ conditional on $V$ and we have parameterized $\kappa_\eta^{(n)}$ from \eqref{eq:pca_post} in terms of $v_j=N_{<j}w_j$. Choosing a normal inverse-gamma prior $\beta\mid\sigma^2 \sim N(0,\sigma^2 I)$, $1/\sigma^2 \sim Ga(1,1)$ results in a conjugate posterior for $\pi_{\text{Bayes}}$ and allows exact sampling of \eqref{eq:pca_reg}. 

We apply \eqref{eq:pca_reg} to the communities and crime dataset. Figure \ref{fig:coeff} shows posterior credible intervals for coefficients. The first eight components are significant, and the results are largely intuitive: for example, violent crime decreases as community income and family stability increases. As before, uncertainty grows with the score index, resulting in wider credible intervals for later coefficients. Additional analysis may be found in the online supplement.

\section{Discussion}
Sequential Gibbs posteriors introduce many potential applications and research directions. One area of interest is combining loss-based Gibbs posteriors with traditional likelihood-based posteriors, as illustrated in \Cref{sec:pcr}. This arises when some parameters are characterized by a likelihood and others by a non-likelihood-based loss. For example, we may use a machine learning algorithm, such as a neural network, for dimensionality reduction for complex high-dimensional features, but then use a likelihood for a low-dimensional response. In addition to improving robustness, this may have major computational advantages over attempting likelihood-based neural network inferences.

Sequential Gibbs posteriors apply to a wide range of loss functions and problems not discussed in this work. It is interesting to extend our principal component analysis results to variants such as sparse, functional, and disjoint principal component analysis. Beyond principal component analysis, sequential Gibbs posteriors can be applied to specific problems in the general settings detailed in Examples \ref{ex:multiscale} and \ref{ex:matrix} as well as to neural networks as just described. \textcolor{black}{Section S3.5 of the supplement contains a blueprint for applying sequential posteriors in generic hierarchical models, including details for the special case of generalized linear regression with random effects.} Nonlinear dimension reduction methods such as diffusion maps may also benefit from sequential Gibbs posteriors since they, like principal component analysis, rely on eigenvectors of matrices built from data and are often used to process data prior to further analyses such as regression. In particular, sequential Gibbs posteriors can provide uncertainty quantification in these settings.

\textcolor{black}{Extensions to general manifolds are also interesting -- for example, those defined by constraint functions $\mathcal{M} = \{\theta\mid g(\theta) \geq 0\}$ for some function $g$. This may be achieved by extending sequential Gibbs posteriors to allow for a joint prior over all parameters that does not necessarily factor. We expect this object retains the key asymptotic properties developed in our work, although the current induction-based proof techniques do not apply.}

\textcolor{black}{Another line of future work is calibration of the hyperparameters $\eta=(\eta_1,\dots,\eta_J)$. In particular, it is desirable to have theoretical results guaranteeing appropriate coverage. For non-Euclidean parameters this may require development of bootstrap theory for confidence balls on general manifolds. It also remains to be seen how calibration of $\eta$ relates to selection of penalty parameters, for example in the context of sparse principal component analysis and when a regularization penalty is applied to the neural network loss mentioned above.}

\section*{Acknowledgements}
This work was partially funded by grants from the United States Office of Naval Research (N000142112510) and National Institutes of Health (R01ES028804, R01ES035625).


\bibliographystyle{abbrv}
\bibliography{paper-ref}


\newpage


\appendix

\section*{Summary}\label{sec:summary}

This section contains supplementary materials for ``\textit{Sequential Gibbs posteriors with applications to principal component analysis}." In \Cref{sec:proofs} we list additional assumptions and prove Theorems \ref{thrm:bvm}, \ref{thrm:concentration}, \ref{thrm:sequential}, and \Cref{prop:pca_bvm} from the main text. In \Cref{sec:geometry} we discuss derivatives on manifolds and prove all manifold-related assumptions in the text and \Cref{sec:proofs} are well-defined in the sense of being chart-invariant. In \Cref{sec:simulations} we detail our simulations, and in \Cref{sec:crime_sup} we expand on our application to the crime dataset in \Cref{sec:crime} of the main text.

\section{Proofs}\label{sec:proofs}

Proofs and additional assumptions for Theorems \ref{thrm:bvm}, \ref{thrm:concentration}, and \ref{thrm:sequential} and the proof of \Cref{prop:pca_bvm} are in Sections \ref{sec:bvm_proof}, \ref{sec:concentration_proof}, \ref{sec:sequential_bvm_proof}, and \ref{sec:prop1}, respectively. Without loss of generality all proofs assume $\eta_1=\cdots=\eta_J=1$. All assumptions are to be interpreted as holding almost surely and any assumptions stated for a single $j$ implicitly hold for all $j\in[J]$. Since its presence is implied by the sample size $n$, the data variable $x$ is henceforth omitted from notation. We also define
\begin{align}
	\Pi_j^{(n)}(d\theta_j\mid \theta_{<j}) &= \frac{1}{z_j^{(n)}(\theta_{<j})}\exp\{-\eta_jn\ell_j^{(n)}(\theta_j\mid\theta_{<j})\} \Pi^{(0)}_j(d\theta_j)\label{eq:cond}
\end{align}
so that $\Pi^{(n)}(d\theta)=\prod_{j=1}^J\Pi^{(n)}_j(d\theta_j\mid\theta_{<j})$.

\subsection{Proof of \Cref{thrm:bvm}}\label{sec:bvm_proof}

\Cref{thrm:bvm} extends asymptotic normality results from \cite{miller2021asymptotic} to manifolds. The following are the additional assumptions for \Cref{thrm:bvm}, which are manifold analogues of the assumptions in \cite[Theorem 5]{miller2021asymptotic}. Recall $\mathcal{M}$ is a smooth $p$-dimensional manifold and a chart on $\mathcal{M}$ is a pair $(U,\varphi)$ where $U\subseteq\mathcal{M}$ is open and $\varphi:U\to\varphi(U)$ is a smooth diffeomorphism. 

\begin{assumption}[Uniformly bounded third derivatives]\label{as:third}
There is an open, bounded $E\subseteq\mathcal{M}$ and chart $(V,\psi)$ with $\phi^\star\in E\cap V$ such that $\ell^{(n)}$ has continuous third derivatives on $E$ and
\begin{align}\label{eq:bound}
    \sup_n\sup_{\theta\in E\cap V}\sup_{i,j,k}\lvert\partial_{ijk} (\ell^{(n)}\circ\psi^{-1})\{\psi(\theta)\}\rvert &< \infty.
\end{align}
\end{assumption}

\begin{assumption}[Positive definite Hessian]\label{as:hessian}
The Hessian $\ell''(\phi^\star)$ is positive definite.
\end{assumption}

In \Cref{as:hessian} ``$\ell''(\phi^\star)$ is positive definite" means there exists a chart $(U,\varphi)$ on $\mathcal{M}$ containing $\phi^\star$ such that the Hessian of $\ell\circ\varphi^{-1}$ is positive-definite at $\varphi(\phi^\star)$. Second and third derivatives on manifolds are not chart-invariant in general. However, \Cref{lem:invariant} in \Cref{sec:geometry} says the above conditions are chart-invariant, and hence well-defined, justifying the use of local coordinates throughout this work. It also says \Cref{as:third} implies $\ell\circ\varphi^{-1}$ is twice differentiable in a neighborhood of $\phi^\star$ for any chart $(U,\varphi)$ containing $\phi^\star$. Thus \Cref{as:hessian} is well-defined.

\begin{proof}[of \Cref{thrm:bvm}]
The proof proceeds by mapping all quantities to Euclidean space and applying \cite[Theorem 5]{miller2021asymptotic}. Euclidean objects are distinguished with a tilde. Fix a chart $(U,\varphi)$ \textcolor{black}{satisfying $\mathrm{supp}(\Pi^{(0)}) \subseteq U$ and}, shrinking $K$ and $E$ if necessary, assume $K\subseteq E\subseteq U$ with $K$ compact and $\phi^\star$ in its interior, and $E$ satisfying \Cref{as:third}. Set $\tilde\phi^\star=\varphi(\phi^\star)$, $\tilde \ell^{(n)}=\ell^{(n)}\circ\varphi^{-1}$, $\tilde \ell=\ell\circ\varphi^{-1}$, $\tilde U=\varphi(U)$, $\tilde E=\varphi(E)$, and $\tilde K=\varphi(K)$. Then $\tilde K\subseteq\tilde E\subseteq\tilde U$ and, since $\varphi$ is a diffeomorphism, $\tilde E$ is open and bounded in $\mathbb{R}^p$, $\tilde K$ is compact, $\tilde\phi^\star$ is in its interior, $\tilde\ell^{(n)}\to\tilde\ell$ almost surely, and $\tilde\ell^{(n)}$ have continuous third derivatives on $\tilde E$. Furthermore, by \Cref{lem:invariant}, the collection $\{\tilde\ell^{(n)'''} : n\in\mathbb{N}\}$ is uniformly bounded on $E$. If $\tilde\theta\in\tilde K\setminus\{\tilde\phi^\star\}$ then $\tilde\theta=\varphi(\theta)$ for some $\theta\in K\setminus\{\phi^\star\}$. Thus, since $\ell(\theta) > \ell(\phi^\star)$ by Assumption (a) in the statement of the theorem, we have $\tilde \ell(\tilde\theta) = \ell[\varphi^{-1}\{\varphi(\theta)\}] = \ell(\theta) > \ell(\phi^\star) = \tilde \ell(\tilde\phi^\star)$. Similarly, if $\tilde\theta\in \tilde U\setminus\tilde K$ then $\tilde\theta=\varphi(\theta)$ for some $\theta\in U\setminus K$. So $\tilde \ell^{(n)}(\tilde\theta)=\ell^{(n)}(\theta)$ and, again by Assumption (a), 
\begin{align*}
    \liminf_n \inf_{\tilde\theta \in \tilde U\setminus \tilde K} \{\tilde \ell^{(n)}(\tilde \theta)-\tilde \ell(\tilde\phi^\star)\} > 0.
\end{align*}
By \Cref{lem:invariant}, $\ell''$ is well-defined and chart-invariant at $\phi^\star$, and $\tilde \ell''(\tilde\phi^\star)=\ell''(\phi^\star)$ is positive definite by \Cref{as:hessian}. \textcolor{black}{This concludes verification that the Euclidean objects $\tilde\phi^\star$, $\tilde \ell^{(n)}$, $\tilde \ell$, $\tilde U$, $\tilde E$, and $\tilde K$ satisfy the assumptions in \cite[Theorem 5]{miller2021asymptotic}.} 

\textcolor{black}{Let $\pi^{(n)}$ be the density of the Gibbs posterior, $\Pi^{(n)}$. The pushforward $\tilde\pi^{(n)}=\varphi_\#\pi^{(n)}$ is well-defined because $\mathrm{supp}(\Pi^{(0)}) \subseteq U$ implies $\mathrm{supp}(\Pi^{(n)}) \subseteq U$, and hence that $\pi^{(n)}$ is a valid probability density on $U$ for all $n$. Furthermore, by change of variables,
\begin{align*}
	\tilde\pi^{(n)}(\tilde\theta) &=
		\frac{1}{z^{(n)}}\exp\{-n\tilde \ell^{(n)}(\tilde\theta)\}\tilde\pi^{(0)}(\tilde\theta)
\end{align*}
where $\tilde\pi^{(0)}(\tilde\theta)=(\pi^{(0)}\circ\varphi^{-1})(\tilde\theta)\lvert\det (\varphi^{-1})'(\tilde\theta)\rvert=\varphi_\#\pi^{(0)}(\tilde\theta)$. Since $\varphi$ is a diffeomophism, $\tilde\pi^{(0)}$ is continuous and strictly positive at $\tilde\phi^\star$. Thus, in addition to the Euclidean objects in the preceding paragraph, the prior $\tilde\pi^{(0)}$ also satisfies the assumptions in \cite[Theorem 5]{miller2021asymptotic}. Therefore, by that result and the fact that $(f\circ g)_\# = f_\#g_\#$, we have 
\begin{align*}
    d_{TV}\{(\tau^{(n)}\circ\varphi)_\#\Pi^{(n)}, N(0, H_\varphi^{-1})\} &= d_{TV}\{\tau_\#^{(n)}\tilde\Pi^{(n)}, N(0, H_\varphi^{-1})\}
        \to 0
\end{align*}
where $H_\varphi=\tilde \ell''(\tilde\phi^\star)$ and $\tau^{(n)}(\tilde\theta)=\sqrt{n}(\tilde\theta-\tilde\theta^{(n)})$.}
\end{proof}

\textcolor{black}{With \Cref{thrm:bvm} established, we now prove \eqref{eq:bvm_alternative} holds when the assumption $\mathrm{supp}(\Pi^{(0)})\subseteq U$ is replaced with $\varphi(U)=\mathbb{R}^p$. The proof uses the following lemma.}

\begin{lemma}[Total variation and truncation]\label{lem:tv}
    Let $X_n$ be a random variable with $P(X_n\in U^c)\to 0$ for some set $U$ and define $Y_n = X_n\mid X_n\in U$. Then $d_{TV}(X_n, Y_n)\to 0$.
\end{lemma}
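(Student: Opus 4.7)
The plan is to work directly with the definition of total variation as a supremum over measurable sets and reduce the claim to the hypothesis $P(X_n\in U^c)\to 0$. Let $P_n$ denote the distribution of $X_n$ and $Q_n$ the distribution of $Y_n$, so that $Q_n(A)=P_n(A\cap U)/P_n(U)$ whenever $P_n(U)>0$ (which holds eventually, since $P_n(U)\to 1$). Then $d_{TV}(X_n,Y_n)=\sup_A|P_n(A)-Q_n(A)|$ and the goal is to show this supremum tends to zero.

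The key computation would split $P_n(A)=P_n(A\cap U)+P_n(A\cap U^c)$ and subtract $Q_n(A)=P_n(A\cap U)/P_n(U)$. Writing $p_n=P_n(U)$, one gets
\begin{align*}
P_n(A)-Q_n(A) &= P_n(A\cap U)\Bigl(1-\tfrac{1}{p_n}\Bigr)+P_n(A\cap U^c) \\
&= -\tfrac{1-p_n}{p_n}P_n(A\cap U)+P_n(A\cap U^c).
\end{align*}
Using $P_n(A\cap U)\le p_n$ and $P_n(A\cap U^c)\le 1-p_n$, the triangle inequality then yields the uniform bound $|P_n(A)-Q_n(A)|\le 2(1-p_n)$, independent of $A$.

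Taking the supremum over $A$ and letting $n\to\infty$ gives $d_{TV}(X_n,Y_n)\le 2(1-p_n)\to 0$ by hypothesis, which finishes the argument. There is no real obstacle here: the whole proof is a two-line computation once one writes down the conditional distribution explicitly. The only minor point to address cleanly is that $Q_n$ is only defined when $p_n>0$; since $p_n\to 1$, one simply restricts attention to sufficiently large $n$.
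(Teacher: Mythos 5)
Your proof is correct and takes essentially the same route as the paper: decompose $P_n(A)$ using $U$ and $U^c$, apply the triangle inequality, and bound each resulting piece by $1-p_n$. If anything, your version handles the signs more carefully --- the paper's final line writes the second contribution as $1-1/P(X_n\in U)$, which is nonpositive, where the correct bound is $1/P(X_n\in U)-1$; your explicit bound $2(1-p_n)$ avoids this slip while reaching the same conclusion.
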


\begin{proof}
    We calculate
    \begin{align*}
        d_{TV}(X_n, Y_n) &= \sup_A |P(X_n\in A)- P(Y_n \in A)|  \\
        &=  \sup_A \bigg|P(X_n\in A)- \frac{P(X_n\in A\cap U)}{P(X_n\in U)}\bigg| \\
        &= \sup_A \bigg|P(X_n\in A)- P(X_n\in A\cap U) + P(X_n\in A\cap U) - \frac{P(X_n\in A\cap U)}{P(X_n\in U)}\bigg| \\
        &\leq \sup_A |P(X_n\in A)- P(X_n\in A\cap U)| \\
        &\qquad + \sup_A \bigg|P(X_n\in A\cap U) - \frac{P(X_n\in A\cap U)}{P(X_n\in U)}\bigg| \\
        &= \sup_A |P(X_n\in A\cap U^c)| +\bigg\{1-\frac{1}{P(X_n\in U)}\bigg\}\sup_A |P(X_n\in A\cap U)| \\
        &\leq P(X_n\in U^c)+1-\frac{1}{P(X_n\in U)}
    \end{align*}
    which vanishes.
\end{proof}

\noindent\textcolor{black}{\textit{Proof of \Cref{eq:bvm_alternative}}. The assumption $\mathrm{supp}(\Pi^{(0)})\subseteq U$ implies $\tilde\pi^{(n)}$ is supported on $U$ and -- since $\varphi$ is a diffeomorphism on $U$ rather than all of $\mathcal{M}$ -- justifies the use of change of variables in the proof of \Cref{thrm:bvm}. Suppose now that $(U,\varphi)$ is a chart on $\mathcal{M}$ such that $\varphi(U)=\mathbb{R}^p$, but $\mathrm{supp}(\Pi^{(0)})\not\subseteq U$. Let $\Pi^{(n)}_U$ be the restricted Gibbs posterior with density
\begin{align*}
	\pi_{U}^{(n)}(\theta) &= \frac{1}{z_{U}^{(n)}}\exp\{-n\ell^{(n)}(\theta)\}\pi^{(0)}_U(\theta),
\end{align*}
where $\pi^{(0)}_U(\theta)=\pi_{0}(\theta)1_U(\theta)$ and $z_{U}^{(n)} = \int_U \exp\{-n\ell^{(n)}(\theta)\}\pi^{(0)}_U(\theta)d\theta$. The normalizing constant $z_U^{(n)}$ is nonzero because we still assume $\phi^\star\in U$ and $\pi^{(0)}$ is continuous and strictly positive at $\phi^\star$. Since the prior density $\pi^{(0)}_U$ is supported in $U$, we have by \Cref{thrm:bvm} that
\begin{align*}
    d_{TV}\{\Pi_U^{(n)}, (\tau^{(n)}\circ\varphi)^{-1}_\#N(0, H_\varphi^{-1})\} &= d_{TV}\{(\tau^{(n)}\circ\varphi)_\#\Pi_U^{(n)}, N(0, H^{-1})\}
        \to 0,
\end{align*}
the equality holding since both $\varphi$ and $\tau^{(n)}$ are invertible. Next, by \Cref{thrm:concentration} and \Cref{lem:tv}, $d_{TV}(\Pi_U^{(n)},\Pi^{(n)})\to 0$. So by the triangle inequality,
\begin{align*}
\scalebox{0.95}{%
    $d_{TV}\{\Pi^{(n)}, (\tau^{(n)}\circ\varphi)^{-1}_\#N(0,H_\varphi^{-1})\} \leq 
    d_{TV}(\Pi^{(n)},\Pi_U^{(n)}) + 
    d_{TV}\{\Pi_U^{(n)}, (\tau^{(n)}\circ\varphi)^{-1}_\#N(0, H_\varphi^{-1})\},$
}
\end{align*}
which vanishes as $n\to\infty$. The condition $\varphi(U)=\mathbb{R}^p$ ensures that $(\tau^{(n)}\circ\varphi)^{-1}_\#N(0,H_\varphi^{-1})$ is a valid probability measure since $N(0,H_\varphi^{-1})$ is a probability measure on all of $\mathbb{R}^p$. Thus, under the same assumptions as in \Cref{thrm:bvm} but with $\varphi(U)=\mathbb{R}^p$ instead of $\mathrm{supp}(\Pi^{(0)})\subseteq U$, we have proven that $d_{TV}\{\Pi^{(n)}, (\tau^{(n)}\circ\varphi)^{-1}_\#N(0,H_\varphi^{-1})\}\to 0$.}

\subsection{Proof of \Cref{thrm:concentration}}\label{sec:concentration_proof}

Recall that $d_j$ is a metric on $\mathcal{M}_j$, $d$ is the metric on $\mathcal{M}$ given by $d^2=d_1^2+\cdots+d_J^2$, $N_{j,\epsilon}=\{\theta_j:d_j(\theta_j,\phi^\star_j)<\epsilon\}$, and $N_\epsilon=\{\theta:d(\theta,\phi^\star)<\epsilon\}$. We also let $d_{<j}$ denote the metric on $\mathcal{M}_{<J}$ given by $d_{<j}^2 = d_1^2+\cdots+d_{j-1}^2$ and set $N_{<j,\epsilon}=\{\theta_{<j}:d_{<j}(\theta_{<j},\phi^\star_{<j})<\epsilon\}$. \textcolor{black}{In general, for smooth manifolds $\mathcal{M}$ and $\mathcal{N}$, which are topological spaces by definition, a function $f:\mathcal{M}\to\mathcal{N}$ is continuous if $f^{-1}(U)$ is open in $\mathcal{M}$ for every open $U\subseteq\mathcal{N}$. As this is the case in what follows, when $\mathcal{M}$ and $\mathcal{N}$ are equipped with metrics $d_\mathcal{M}$ and $d_\mathcal{N}$, continuity of $f$ is equivalent to the statement: If $d_\mathcal{M}(x_n,x)\to 0$, then $d_\mathcal{N}\{f(x_n),f(x)\}\to 0$.}


\begin{assumption}\label{as:concentration}
In the notation of \Cref{thrm:concentration}, for each $j\in[J]$:
\begin{enumerate}[label=(\alph*)]
\item $\theta_{<j}\mapsto \ell_j\{\theta^\star_j(\theta_{<j})\mid\theta_{<j}\}$ is continuous at $\phi^\star_{<j}$.

\item For some $\delta>0$, $\theta_j\mapsto \ell_j(\theta_j\mid\theta_{<j})$ is continuous at $\theta^\star_j(\theta_{<j})$ for all $\theta_{<j}\in N_{<j,\delta}$.

\item For every $\epsilon>0$ there exists $\delta>0$ such that
\begin{align*}
	\liminf_n\inf_{\theta_{<j}\in N_{<j,\delta}}\inf_{\theta_j\in N_{j,\epsilon}^c} \left[\ell^{(n)}_j(\theta_j\mid\theta_{<j})-\ell_j\{\theta^\star_j(\theta_{<j})\mid\theta_{<j}\}\right] &> 0.
\end{align*}
\end{enumerate}
\end{assumption}

A sufficient condition for parts (a) and (b) of \Cref{as:concentration} to hold is that $(\theta_{<j},\theta_j)\mapsto\ell_j(\theta_j\mid \theta_{<j})$ and $\theta_{<j}\mapsto\theta^\star_j(\theta_{<j})$ are continuous. A subtle but important difference between part (c) of \Cref{as:concentration} and its analogue in \cite[Theorem 3]{miller2021asymptotic} is that the former includes an infimum over the conditional parameters $\theta_{<j}$. This ensures the loss minimizer $\theta^\star_j(\theta_{<j})$ is uniformly well separated for all $\theta_{<j}$ in a neighborhood $N_{<j,\delta}$ of $\phi^\star_{<j}$. As a consequence, $\pi_j$ concentrates around $\theta_j^\star(\theta_{<j})$ uniformly over $N_{<j,\delta}$.

\begin{proof}[of \Cref{thrm:concentration}]
We show $\Pi^{(n)}(N_\epsilon^c)\to 0$ by induction on $J$. The case $J=1$ is precisely \cite[Theorem 3]{miller2021asymptotic}. Fix $J>1$ and assume the result holds for the sequential Gibbs posterior $\Pi^{(n)}_{<J}$ associated to the losses $\ell^{(n)}_j$ for $j\in[J-1]$. Fixing $\epsilon>0$,
\begin{align*}
	N_\epsilon^c &= \bigg\{\theta : \sum_{j=1}^J d_j^2(\theta_j,\phi^\star_j)\geq\epsilon^2\bigg\}
		\subseteq \left\{\theta : d_{<J}(\theta_{<J},\phi^\star_{<J})\geq \epsilon'\right\}\cup\left\{\theta:d_J(\theta_J,\phi^\star_J)\geq\epsilon'\right\} \\
			&= \left(N_{<J,\epsilon'}^c\times\mathcal{M}_J\right)\cup \left(\mathcal{M}_{<J}\times N_{J,\epsilon'}^c\right),
\end{align*}
where $\epsilon'=\epsilon/\sqrt{2}$. Therefore, since $\Pi^{(n)}_{<J}$ is the marginal of $\Pi^{(n)}$ over $\mathcal{M}_J$,
\begin{align}\label{eq:union_bound}
	\Pi^{(n)}(N_\epsilon^c) &\leq \Pi^{(n)}_{<J}(N_{<J,\epsilon'}^c) + \Pi^{(n)}(\mathcal{M}_{<J}\times N_{J,\epsilon'}^c).
\end{align}
$\Pi^{(n)}_{<J}(N_{<J,\epsilon'}^c)\to 0$ by inductive hypothesis; it remains to show $\Pi^{(n)}(\mathcal{M}_{<J}\times N_{J,\epsilon'}^c)\to 0$. Note
\begin{enumerate}[label=(\roman*)]
\item By \Cref{as:concentration}(c) there exist $\beta>0$ and $\delta_1>0$ such that for all $n$ sufficiently large,
\begin{align*}
	\inf_{\theta_{<J}\in N_{<J,\delta_1}}\inf_{\theta_J\in N_{J,\epsilon'}^c} \left[\ell^n_J(\theta_J\mid\theta_{<J})-\ell_J\{\theta^\star_J(\theta_{<J})\mid\theta_{<J}\}\right] &\geq 3\beta 
		> 0.
\end{align*}

\item By parts (a) and (b) of \Cref{as:concentration} there exists $\delta_2>0$ such that $\lvert\ell_J\{\theta^\star_J(\theta_{<J})\mid\theta_{<J}\} - \ell_J(\phi^\star_J\mid\phi^\star_{<J}\}\rvert<\beta$ and $\lvert\ell_J(\theta_J\mid\theta_{<J})-\ell_J(\theta^\star_J(\theta_{<J})\mid\theta_{<J})\rvert<\beta/2$ for all $\theta_{<J}\in N_{<J,\delta_2}$.
\end{enumerate}
Set $\delta=\min\{\delta_1,\delta_2\}$. We have
\begin{align*}
	\Pi^{(n)}(\mathcal{M}_{<J}\times N_{J,\epsilon'}^c) &= \int_{N_{<J,\delta}^c}\int_{N_{J,\epsilon'}^c}\Pi^{(n)}_J(d\theta_J\mid\theta_{<J}) \Pi^{(n)}_{<J}(d\theta_{<J}) \\
		&\qquad +\int_{N_{<J,\delta}}\int_{N_{J,\epsilon'}^c}\Pi^{(n)}_J(d\theta_J\mid\theta_{<J}) \Pi^{(n)}_{<J}(d\theta_{<J}).
\end{align*}
The inductive hypothesis together with $\int_{N_{J,\epsilon'}^c}\Pi^{(n)}_J(d\theta_J\mid\theta_{<J})\leq 1$ for all $\theta_{<J}$ imply
\begin{align*}
	\int_{N_{<J,\delta}^c}\int_{N_{J,\epsilon'}^c}\Pi^{(n)}_J(d\theta_J\mid\theta_{<J}) \Pi^{(n)}_{<J}(d\theta_{<J})d\theta_{<J} &\leq \Pi^{(n)}_{<J}(N_{<J,\delta}^c)
		\to 0.
\end{align*}
Define $f_n(\theta_{<J})=\int_{N_{J,\epsilon'}^c}\Pi^{(n)}_J(d\theta_J\mid\theta_{<J})$. If $f_n(\theta_{<J})\to 0$ uniformly on $N_{<J,\delta}$, then for any $\gamma>0$ we have $f_n(\theta_{<J})<\gamma$ for all $\theta_{<J}\in N_{<J,\delta}$ and all $n$ sufficiently large and hence 
\begin{align*}
	\int_{N_{<J,\delta}}\int_{N_{J,\epsilon'}^c}\Pi^{(n)}_J(d\theta_J\mid\theta_{<J}) \Pi^{(n)}_{<J}(d\theta_{<J}) &= \int_{N_{<J,\delta}}f_n(\theta_{<j})\Pi^{(n)}_{<J}(d\theta_{<J})
		\leq \gamma.
\end{align*}
Thus the integral vanishes and the proof is done. To verify $f_n(\theta_{<J})\to 0$ uniformly on $N_{<J,\delta}$,
\begin{equation}\label{eq:dominated}
\begin{aligned}
	f_n(\theta_{<J}) &= \frac{\int_{N_{J,\epsilon'}^c}\exp\{-n\ell^{(n)}_J(\theta_J\mid\theta_{<J})\}\Pi^{(0)}_J(d\theta_J)}{\int_{\mathcal{M}_J}\exp\{-n\ell^{(n)}_J(\theta_J\mid\theta_{<J})\}\Pi^{(0)}_J(d\theta_J)} \\
		&= \frac{\exp[n\{\ell_J(\phi^\star_J\mid\phi^\star_{<J})+2\beta\}]\int_{N_{J,\epsilon'}^c}\exp\{-n\ell^{(n)}_J(\theta_J\mid\theta_{<J})\}\Pi^{(0)}_J(d\theta_J)}{\exp[n\{\ell_J(\phi^\star_J\mid\phi^\star_{<J})+2\beta\}]\int_{\mathcal{M}_J}\exp\{-n\ell^{(n)}_J(\theta_J\mid\theta_{<J})\}\Pi^{(0)}_J(d\theta_J)}.
\end{aligned}
\end{equation}
By our choice of $\beta$ and $\delta$, for all $n$ sufficiently large and all $\theta_{<J}\in N_{<J,\delta}$ and $\theta_J\in N_{J,\epsilon'}^c$,
\begin{align*}
	\ell^n_J(\theta_J\mid\theta_{<J}) - \ell_J(\phi^\star_J\mid\phi^\star_{<J}) - 2\beta &= \ell^n_J(\theta_J\mid\theta_{<J}) - \ell_J(\theta^\star_J(\theta_{<J})\mid\theta_{<J}) \\
		&\qquad + \ell_J\{\theta^\star_J(\theta_{<J})\mid\theta_{<J}\} - \ell_J(\phi^\star_J\mid\phi^\star_{<J}) - 2\beta \\
		&\geq 3\beta - \beta - 2\beta = 0.
\end{align*}
So for all $n$ sufficiently large and all $\theta_{<J}\in N_{<J,\delta}$ the numerator in \eqref{eq:dominated} satisfies
\begin{align*}
	\exp[n\{\ell_J(\phi^\star_J\mid\phi^\star_{<J})+2\beta\}]\int_{N_{J,\epsilon'}^c}\exp\{-n\ell^{(n)}_J(\theta_J\mid\theta_{<J})\}\Pi^{(0)}_J(d\theta_J) &\leq 1.
\end{align*} 
Again by our choices of $\beta$ and $\delta$ and since $\ell^{(n)}_j(\cdot\mid\theta_{<J})\to\ell_j(\cdot\mid\theta_{<J})$ almost surely,
\begin{align*}
	\ell^n_J(\theta_J\mid\theta_{<J})-\ell_J(\phi^\star_J\mid\phi^\star_{<J})-2\beta &\to \ell_J(\theta_J\mid\theta_{<J})-\ell_J(\phi^\star_J\mid\phi^\star_{<J})-2\beta \\
		&= \ell_J(\theta_J\mid\theta_{<J}) - \ell_J\{\theta^\star_J(\theta_{<J})\mid\theta_{<J}\} \\
		&\qquad + \ell_J\{\theta^\star_J(\theta_{<J})\mid\theta_{<J}\} - \ell_J(\phi^\star_J\mid\phi^\star_{<J})-2\beta \\
		&\leq \tfrac{\beta}{2} + \beta - 2\beta
		< 0.
\end{align*}
So $\exp[-n\{\ell^n_J(\theta_J\mid\theta_{<J})-\ell_J(\phi^\star_J\mid\phi^\star_{<J})-2\beta\}]\to \infty$ and, since $\Pi^{(0)}(N_{j,\epsilon})>0$ for all $\epsilon>0$, the denominator in \eqref{eq:dominated} goes to $\infty$ by Fatou's lemma. Since the numerator in \eqref{eq:dominated} is bounded for $n$ sufficiently large and the denominator goes to infinity, $f_n(\theta_{<J})\to 0$ uniformly on $N_{<J,\delta}$.
\end{proof}

\subsection{Proof of \Cref{thrm:sequential}}\label{sec:sequential_bvm_proof}

We begin with the additional assumptions for \Cref{thrm:sequential}.

\begin{assumption}[Uniformly bounded third derivatives]\label{as:third_seq}
For all $\theta_{<j}$ there is an open, bounded $E_j\subseteq\mathcal{M}_j$ and chart $(V_j,\psi_j)$ with $\phi_j^\star\in E_j\cap V_j$ such that $\ell_j^{(n)}(\cdot\mid\theta_{<j})$ has continuous third derivatives on $E_j$ and
\begin{align}\label{eq:bound_seq}
    \sup_n\sup_{\theta_j\in E_j\cap V_j}\sup_{a,b,c}\lvert\partial_{abc}\{\ell_j^{(n)}(\cdot\mid\theta_{<j})\circ\psi_j^{-1}\}\{\psi_j(\theta_j)\}\rvert &< \infty.
\end{align}
\end{assumption}

\begin{assumption}[Positive definite Hessians]\label{as:hess_seq}
$H_j=\ell_j''(\phi_j^\star\mid \phi_{<j}^\star)$ is positive definite.
\end{assumption}

\begin{assumption}[Well separated minimizers]\label{as:min_sep_seq}
Shrinking $K_j$ in \Cref{as:seq_min} if necessary (but keeping $K_j$ compact and $\phi_j^\star$ in its interior), we have $K_j \subseteq E_j$ and, for all $\tilde \theta\in (\tau_{<j}^{(n)}\circ\varphi_{<j})(U_{<j})$,
\begin{align*}
    \liminf_n\inf_{\theta\in U_j\setminus K_j}[\ell_j^{(n)}\{\theta\mid (\tau_{<j}^{(n)}\circ\varphi_{<j})^{-1}(\tilde \theta)\}- \ell_j(\phi_j^\star\mid\phi_{<j}^\star)]>0.
\end{align*}
\end{assumption}

\begin{assumption}[Uniform convergence of minimizers]\label{as:min_seq}
    For any $\tilde \theta\in (\tau_{<j}^{(n)}\circ\varphi_{<j})(U_{<j})$, 
    \begin{align*}
        \theta_j^{(n)}\{(\tau_{<j}^{(n)}\circ\varphi_{<j})^{-1}(\tilde \theta)\}\to\phi_{j}^\star.
    \end{align*}
\end{assumption}

\begin{assumption}[Uniform convergence of losses]\label{as:loss_seq}
    For any $\tilde \theta\in (\tau_{<j}^{(n)}\circ\varphi_{<j})(U_{<j})$, 
    \begin{align*}
        \ell_j^{(n)}\{\cdot\mid (\tau_{<j}^{(n)}\circ\varphi_{<j})^{-1}(\tilde \theta)\}\to \ell_j(\cdot\mid \phi_{<j}^\star).
    \end{align*}
\end{assumption}
We clarify assumptions in the case $J=2$. One can sample from the sequential Gibbs posterior by sampling $\theta_1\sim \pi_1^{(n)}$ and $\theta_2\mid\theta_1\sim\pi_2^{(n)}(\cdot\mid\theta_1)$. Now let $(x_1,x_2)$ be the image of $(\theta_1,\theta_2)$ after mapping to Euclidean space with $\varphi$ and centering/scaling with $\tau^{(n)}$. We need to recover $\theta_1$ in order to specify the distribution of $x_2$, and this requires undoing $\tau_1$ and then $\varphi_1$. Explicitly, 
\begin{align*}
    \theta_1=(\tau_{1}^{(n)}\circ\varphi_{1})^{-1}(x_1) = \varphi_1^{-1}(x_1/\sqrt{n}+\tilde \theta_1^{(n)}).
\end{align*}
As $n\to\infty$, $x_1/\sqrt{n}\to 0$ and $\tilde \theta_1^{(n)}\to \tilde \phi_1^\star$. So $\theta_1\to \phi_1^\star$ and for large $n$, conditioning on $(\tau_{<j}^{(n)}\circ\varphi_{<j})^{-1}(\tilde \theta)$ is similar to conditioning on $\phi_{<j}^\star$. Assumptions \ref{as:third_seq}-\ref{as:loss_seq} are exactly those required to apply \Cref{thrm:bvm} to $\theta_2\mid\theta_1=\theta_2\mid \varphi_1^{-1}(x_1/\sqrt{n}+\tilde \theta_1^{(n)})$. The additional complexity is that the conditional parameters now vary with $n$, hence uniform convergence (\ref{as:min_seq}-\ref{as:loss_seq}) is required to evaluate limits such as $\lim_n\ell^{(n)}\{\cdot\mid \varphi_1^{-1}(x_1/\sqrt{n}+\tilde \theta_1^{(n)})\}$. The following is used in the proof of \Cref{thrm:sequential} to calculate $(\tau^{(n)}\circ\varphi)_\#\Pi^{(n)}$.

\begin{lemma}\label{lem:pushforward}
Let $X\subseteq\mathbb{R}^m$ and $Y\subseteq\mathbb{R}^n$ be open. Assume $\alpha:X\to \alpha(X)=U$ is a $C^1$-diffeomorphism and $\beta:X\times Y\to \mathbb{R}^n$ satisfies the following: For each $x\in X$, the map $\beta_x:Y\to \beta_x(Y)=V_x$ given by $\beta_x(y)=\beta(x,y)$ is a $C^1$-diffeomorphism. Define $f:X\times Y\to\mathbb{R}^m\times\mathbb{R}^n$ by $f(x,y)=\{\alpha(x),\beta_x(y)\}$. If $\Pi$ is a probability distribution on $X\times Y$ with density $\pi(x,y)=\pi_2(y\mid x)\pi_1(x)$, then $f_\#\Pi$ is a probability distribution on $f(X\times Y)$ with density
\begin{align*}
	f_\#\pi(u,v) &= (\beta_{\alpha^{-1}(u)})_\#\pi_2\{v\mid\alpha^{-1}(u)\}\alpha_\#\pi_1(u)
\end{align*}
where $f_\#\pi$ and $\alpha_\#\pi_1$ are the pushforward densities of $\pi$ and $\pi_1$ by $f$ and $\alpha$, respectively, and $(\beta_x)_\#\pi_2(v\mid x)$ is the pushforward of the conditional probability density $\pi_2(\cdot\mid x)$ on $Y$ by $\beta_x$.
\end{lemma}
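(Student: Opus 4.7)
The plan is to exploit the triangular structure of $f(x,y)=(\alpha(x),\beta_x(y))$ by disintegrating $\Pi$ along its first coordinate and handling the marginal in $x$ and the conditional in $y$ separately, each by a standard Euclidean change of variables. This routing is natural because the hypotheses only give smoothness of $\alpha$ and of each slice $\beta_x$; they do not assert joint regularity of $\beta$ in $(x,y)$.

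First I would verify that $f$ is a bijection from $X\times Y$ onto $f(X\times Y)$ with explicit inverse $f^{-1}(u,v)=(\alpha^{-1}(u),\beta^{-1}_{\alpha^{-1}(u)}(v))$, which follows by peeling off the first coordinate (using that $\alpha$ is a diffeomorphism onto $U$) and then the second (using that each $\beta_x$ is a diffeomorphism onto $V_x$). Writing $\pi(x,y)=\pi_2(y\mid x)\pi_1(x)$, the marginal density of the first coordinate of $f_\#\Pi$ is, by the Euclidean change-of-variables formula applied to $\alpha$,
\begin{align*}
\alpha_\#\pi_1(u) \;=\; \frac{\pi_1(\alpha^{-1}(u))}{|\det D\alpha(\alpha^{-1}(u))|}\quad\text{on }U.
\end{align*}
For the conditional distribution of the second coordinate of $f_\#\Pi$ given the first equals $u$, I would first condition on $X=\alpha^{-1}(u)$ (legitimate since $\alpha$ is a bijection), obtaining the conditional density $\pi_2(\cdot\mid\alpha^{-1}(u))$ on $Y$, and then push forward by $\beta_{\alpha^{-1}(u)}$. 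Another application of the Euclidean change of variables yields the conditional density $(\beta_{\alpha^{-1}(u)})_\#\pi_2(v\mid\alpha^{-1}(u))$ on $V_{\alpha^{-1}(u)}$.

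Multiplying the conditional by the marginal recovers the joint density in the statement. There is no genuine obstacle; the only mild subtlety is the choice of route. One could instead compute the joint Jacobian on $\mathbb{R}^m\times\mathbb{R}^n$ directly, observing that
\begin{align*}
Df(x,y)=\begin{pmatrix}D\alpha(x)&0\\ D_x\beta(x,y)&D_y\beta(x,y)\end{pmatrix}
\end{align*}
is block lower triangular with $|\det Df(x,y)|=|\det D\alpha(x)|\cdot|\det D\beta_x(y)|$, and then invoke a single $(m+n)$-dimensional change of variables. That route gives the same formula but presupposes joint differentiability of $\beta$; the marginal-then-conditional route above uses only the hypotheses stated, so I would present it in the paper.
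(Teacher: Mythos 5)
Your proposal is correct and takes essentially the same route as the paper: both proofs exploit the block-triangular structure of $f$ to apply the single-variable Euclidean change of variables first in $x$ (via $\alpha$) and then in $y$ for each fixed $u$ (via $\beta_{\alpha^{-1}(u)}$), precisely because joint differentiability of $\beta$ is not assumed. The paper presents this as one explicit integral computation over indicators with a Fubini swap between the two substitutions, while you phrase it as a marginal-then-conditional decomposition of $f_\#\Pi$, but the mathematical content is the same, and your closing remark about why the single $(m+n)$-dimensional Jacobian route is unavailable matches the design of the lemma.
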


\begin{proof}
Let $1_A$ denote the indicator function on a set $A$. The image of $X\times Y$ under $f$ satisfies
\begin{align*}
	f(X\times Y) &= \{(\alpha(x),\beta_x(y)) : x\in X\ \text{and}\ y\in Y\} \\
		&= \{(u,v) : \alpha^{-1}(u)\in X\ \text{and}\ \beta_{\alpha^{-1}(u)}^{-1}(v)\in Y\} \\
		&= \{(u,v) : u\in U\ \text{and}\ v\in V_{\alpha^{-1}(u)}\}.
\end{align*}
So $1_{f(X\times Y)}\{(u,v)\}=1_U(u)1_{V_{\alpha^{-1}(u)}}(v)$. For any measurable subset $A$ of $f(X\times Y)$,
\begin{align*}
	f_\#\Pi(A) 
	&= \int_Y\int_X 1_A\{\alpha(x),\beta_x(y)\}\pi_2(y\mid x)\pi_1(x)dxdy \\
	&= \int_Y\int_U 1_A\{u,\beta_{\alpha^{-1}(u)}(y)\}\pi_2\{y\mid\alpha^{-1}(u)\}\pi_1\{\alpha^{-1}(u)\}\lvert\det d\alpha^{-1}(u)\rvert dudy \\
	&= \int_U\int_Y 1_A\{u,\beta_{\alpha^{-1}(u)}(y)\}\pi_2\{y\mid\alpha^{-1}(u)\}dy\ \alpha_\#\pi_1(u)du \\
	&= \int_U\int_{V_{\alpha^{-1}(u)}} 1_A(u,v)\pi_2\{\beta^{-1}_{\alpha^{-1}(u)}(v)\mid\alpha^{-1}(u)\}\lvert\det d_v\beta^{-1}_{\alpha^{-1}(u)}(v)\rvert dv\ \alpha_\#\pi_1(u)du \\
	&= \int_A(\beta_{\alpha^{-1}(u)})_\#\pi_2\{v\mid\alpha^{-1}(u)\}\alpha_\#\pi_1(u)dvdu.
\end{align*}
The second and fourth equalities are obtained by substituting $u=\alpha(x)$ and $v=\beta_{\alpha^{-1}(u)}(y)$, respectively. The change of variables formula is valid in each case since $\alpha$ is a $C^1$-diffeomorphism of $X$ and $\beta_x$ is a $C^1$-diffeomorphism of $Y$ for each $x$. The last equality follows from $1_{f(X\times Y)}(u,v)=1_U(u)1_{V_{\alpha^{-1}(u)}}(v)$.
\end{proof}

\begin{proof}[of \Cref{thrm:sequential}]
We proceed by induction. When $J=1$, \Cref{thrm:bvm} provides
\begin{align*}
    (\tau_1^{(n)}\circ\varphi_1)_\#\Pi_1^{(n)}\to N(0, H_{\varphi_1}^{-1})
\end{align*}
in total variation, hence setwise. Fix $J>1$, let $\Pi^{(n)}$ be the sequential Gibbs posterior associated to $\{\ell^{(n)}\}_{j=1}^J$, and let $\pi^{(n)}$ be its density. Similarly, let $\Pi^{(n)}_{<J}$ be the sequential Gibbs posterior associated to $\{\ell^{(n)}_j\}_{j=1}^{J-1}$, and let $\pi^{(n)}_{<j}$ be its density. Then $\tilde\Pi^{(n)}_{<J}=(\tau^{(n)}_{<J}\circ\varphi_{<J})_\#\Pi^{(n)}_{<J}$ has density $\tilde\pi^{(n)}_{<J}=(\tau_{<J}^{(n)}\circ\varphi_{<J})_\#\pi_{<J}^{(n)}$. Assume by inductive hypothesis that $\tilde\Pi^{(n)}_{<J}\to \prod_{j=1}^{J-1} N(0,H_{\varphi_j}^{-1})$ setwise, and let $\tau^{(n)}$ and $\varphi$ be as in the statement of the theorem. By \Cref{lem:pushforward},
    \begin{align*}
        (\tau^{(n)}\circ\varphi)_\#\pi^{(n)}(\tilde\theta) &= \prod_{j=1}^J (\tau^{(n)}_j\circ\varphi_j)_\#\pi^{(n)}_j\{\tilde\theta_j\mid(\tau^{(n)}_{<j}\circ\varphi_{<j})^{-1}(\tilde\theta_{<j})\} \\
         &= (\tau^{(n)}_J\circ\varphi_J)_\#\pi^{(n)}_J\{\tilde\theta_J\mid(\tau^{(n)}_{<J}\circ\varphi_{<J})^{-1}(\tilde\theta_{<J})\}\tilde\pi^{(n)}_{<J}(\tilde\theta_{<J}),
    \end{align*}
where each $\pi^{(n)}_j$ is the density of $\Pi^{(n)}_j$. As discussed prior to the statement of \Cref{lem:pushforward}, Assumptions \ref{as:third_seq}-\ref{as:loss_seq} suffice to apply \Cref{thrm:bvm} to $\Pi^{(n)}_J$, collectively implying that
    \begin{align*}
        (\tau_{J}^{(n)}\circ\varphi_{J})_\#\Pi_J^{(n)}\{(\tau_{<J}^{(n)}\circ\varphi_{<J})^{-1}(\tilde \theta_{<J})\} \to N(0, H_{\varphi_J}^{-1})
    \end{align*}
    in total variation -- and hence setwise -- for every $\tilde \theta_{<J}\in (\tau^{(n)}_{<J}\circ \varphi_{<J})(U_1\times\cdots\times U_{J-1})$. It then immediately follows by Theorem 1 in \cite{sethuraman1961some} and the inductive hypothesis that
\begin{align*}
	(\tau^{(n)}\circ\varphi)_\#\Pi^{(n)}\to \prod_{j=1}^{J}N(0, H_{\varphi_j}^{-1})
\end{align*}
setwise, completing the proof.
\end{proof}


\subsection{Proof of \Cref{prop:pca_bvm}}\label{sec:prop1}


We now apply \Cref{thrm:sequential} to principal component analysis, resulting in \Cref{prop:pca_bvm}. 

\begin{proof}[of \Cref{prop:pca_bvm}]
Let $\hat\lambda_1>\cdots>\hat\lambda_p>0$ be the eigenvalues of the empirical covariance and set $\Lambda^{(n)}=\text{diag}(\hat\lambda_1,...,\hat\lambda_p)$. The $j$th finite sample loss is
    \begin{align*}
        \ell_j^{(n)}(w_j\mid v_{<j}) &= -w_j^TN_{<j}^T\hat\Lambda^{(n)} N_{<j}w_j, 
    \end{align*}
where $N_{<j}$ is a basis for the null space of $v_{<j}$. By the strong law of large numbers this converges almost surely to
    \begin{align*}
        \ell_j(w_j\mid v_{<j}) &= -w_j^TN_{<j}^T\Lambda N_{<j}w_j, 
    \end{align*}
where $\Lambda=\text{diag}(\lambda_1,...,\lambda_p)$ and $\lambda_1>\cdots>\lambda_p>0$ are the eigenvalues of $var(x)$. Thus \Cref{as:converge} holds. \Cref{as:min} follows from the fact that quadratic forms are differentiable and maximized by the leading eigenvector. In particular,  $w_j^{(n)}(v_{<j})$ is the leading eigenvector of $N_{<j}^T\hat\Lambda^{(n)} N_{<j}$ and $w_j^{\star}(v_{<j})$ is the leading eigenvector of $N_{<j}^T\Lambda_{\geq j} N_{<j}$. The sequential minimizers are $\phi_j^\star=(1,0,...,0)\in\mathbb{S}^{p-k}$, and $N_{<j}^\star=[e_{j},...e_p]$ after conditioning on $\phi_1^\star,...,\phi_{j-1}^\star$. Explicitly,
\begin{align*}
    \ell_j(w_j\mid \phi_{<j}^\star) &= w_j^T\Lambda_{\geq j}w_j
\end{align*}
with $\Lambda_{>j}=\text{diag}(\lambda_{j},...,\lambda_p)$.

The rest of the proof requires charts. Define $U_j=\{w\in\mathbb{S}^{p-j}\mid w_1>0\}$ and $\varphi_j(w)=w_{-1}$ where $w_{-1}\in\mathbb{R}^{p-j}$ is the vector obtained by deleting the first entry of $w$. The inverse $\psi_j=\varphi_j^{-1}$ maps $u=(u_1,...,u_{p-j})$ to $\psi_j(u)=(\sqrt{1-||u||^2}, u_1, ..., u_{p-j})$. We always have $\phi_j^\star\in U_j$, and a uniform prior is always positive and continuous at $\phi_j^\star$. 

We compute third derivatives for \Cref{as:third_seq}. Substituting $w_j=\psi_j(u)$ into $\ell_j^{(n)}(w_j\mid v_{<j})$ and applying the product rule:
{\footnotesize 
\begin{align*}
    \partial_{abc}\{\psi_j(u)^TN_{<j}^T\Lambda^{(n)}N_{<j}\psi_j(u)\} &= 2\{\psi_j(u)^TN_{<j}^T\Lambda^{(n)}N_{<j}\partial_{abc}\psi_j(u) + \partial_a\psi_j(u)^TN_{<j}^T\Lambda^{(n)}N_{<j}\partial_{bc}\psi_j(u) \\
		&\ + \partial_b\psi_j(u)^TN_{<j}^T\Lambda^{(n)}N_{<j}\partial_{ac}\psi_j(u) +\partial_c\psi_j(u)^TN_{<j}^T\Lambda^{(n)}N_{<j}\partial_{ab}\psi_j(u)\}
\end{align*}}%
\noindent Recall $x^TAy \leq \lambda_\text{max}(A)||x||||y||$, where $\lambda_\text{max}$ is the largest eigenvalue of $A$. Applying the triangle inequality and using the bound $\lambda_\text{max}(A)(N_{<j}^T\Lambda^{(n)}N_{<j}) \leq \hat\lambda_1$,
{\small
\begin{align*}
    ||\partial_{abc}\{\psi_j(u)^TN_{<j}^T\Lambda^{(n)}N_{<j}\psi_j(u)\}|| &\leq 2\hat\lambda_1\{||\psi_j(u)|| ||\partial_{abc}\psi_j(u)|| + ||\partial_a\psi_j(u)|| ||\partial_{bc}\psi_j(u)|| + \\&\qquad\qquad||\partial_b\psi_j(u)|| ||\partial_{ac}\psi_j(u)|| + ||\partial_c\psi_j(u)|| ||\partial_{ab}\psi_j(u)||\}.
\end{align*}}%
The map $\psi$ is smooth on the compact set $B_{1/2}(0)=\{u\mid ||u||\leq 1/2\}\}\subseteq \varphi_j(U_j)$. Hence, there is a constant $C$ such that
{\footnotesize
\begin{align*}
    \max\left\{\sup_{u\in B_{1/2}(0)} ||\psi_j(u)||, \sup_{u\in B_{1/2}(0)}\sup_{a}||\partial_a\psi_j(u)||, \sup_{u\in B_{1/2}(0)}\sup_{a, b} ||\partial_{ab}\psi_j(u)||, \sup_{u\in B_{1/2}(0)}\sup_{a, b, c}||\partial_{abc}\psi_j(u)||\right\} \leq C
\end{align*}}%
where $\sup_a$, $\sup_{a,b}$, and $\sup_{a,b,c}$ range over all possible first, second, and third derivatives, respectively. Therefore
\begin{align*}
    \sup_{u\in B_{1/2}(0)}\sup_{a, b, c}||\partial_{abc}\{\psi_j(u)^TN_{<j}^T\Lambda^{(n)}N_{<j}\psi_j(u)\}|| &\leq 8C\hat\lambda_1
\end{align*}
which is bounded in the limit because $\hat \lambda_1\to \lambda_1$. Hence, \Cref{as:third_seq} holds.

We compute the Hessian for \Cref{as:hess_seq}. Substituting $w_j=\psi_j(u)$ into $\ell_j(w_j\mid \phi_{<j}^\star)$,
\begin{align*}
    \ell_j\{\psi_j(u) \mid \phi_{<j}^\star\} &= \psi_j(u)^T\Lambda_{\geq j}\psi_j(u) \\
    &-\lambda_j(1-||u||^2) - \sum_{i=j+1}^{p} \lambda_i u_i^2 \\
    &= -\lambda_j+\lambda_j\sum_{i=j+1}^{p}u_j^2 - \sum_{i=j+1}^{p} \lambda_i u_i^2 \\
    &= -\lambda_j + \sum_{i=j+1}^{p}(\lambda_j-\lambda_i)u_i^2.
\end{align*}
Therefore, $H_j = \text{diag}(\lambda_j-\lambda_{j+1},\ldots,\lambda_j-\lambda_{p})$, which is positive definite because the eigenvalues are distinct. This calculation is well-defined because $\phi_j^\star$ is a critical point of $\ell_j(\cdot \mid \phi_{<j}^\star)$. 

We temporarily postpone \Cref{as:min_sep_seq}. \Cref{as:min_seq} and \Cref{as:loss_seq} are proved simultaneously by induction. The $j=1$ case is automatic from $\Lambda^{(n)} \to \Lambda$ and elementary perturbation theory (for example, from \cite{magnus1985differentiating}). Now assume \Cref{as:min_seq} and \Cref{as:loss_seq} for $k\in [j-1]$. By \Cref{as:min_seq}, $(\tau_{<j}^{(n)}\circ\varphi_{<j})^{-1}(\tilde v)\}\to \phi_{<j}^\star$ for any $\tilde v\in (\tau_{<j}^{(n)}\circ\varphi_{<j})(U_{<j})$. Let $N_{<j}(\tilde v)$ be a basis for the null space of $(\tau_{<j}^{(n)}\circ\varphi_{<j})^{-1}(\tilde v)$. Without loss of generality we assume the map $\tilde v\mapsto N_{<j}(\tilde v)$ is continuous, so $N_{<j}(\tilde v)\to N_{<j}^\star=[e_j,...,e_p]$ as $(\tau_{<j}^{(n)}\circ\varphi_{<j})^{-1}(\tilde v)\}\to \phi_{<j}^\star$. This can be achieved, for example, using the Gram-Schmidt algorithm and the fact that both $\varphi_{<j}$ and $\tau_{<j}^{(n)}$ have continuous inverses. Therefore,
\begin{align*}
    N_{<j}(\tilde v)^T\Lambda^{(n)}N_{<j}(\tilde v) \to N_{<j}^{\star T}\Lambda N_{<j}^{\star}=\Lambda_{\geq j},
\end{align*}
and hence $\ell_j^{(n)}\{\cdot\mid (\tau_{<j}^{(n)}\circ\varphi_{<j})^{-1}(\tilde v)\}\to \ell_j(\cdot\mid \phi_{<j}^\star)$. Again by perturbation theory, $w_j^{(n)}\{(\tau_{<j}^{(n)}\circ\varphi_{<j})^{-1}(\tilde v)\}\to\phi_{j}^\star$, validating \Cref{as:min_seq} and \Cref{as:loss_seq}.

We now verify \Cref{as:min_sep_seq}. First, we show the convergence $\ell_j^{(n)}\{\cdot\mid (\tau_{<j}^{(n)}\circ\varphi_{<j})^{-1}(\tilde v)\}\to \ell_j(\cdot\mid \phi_{<j}^\star)$ can be upgraded from pointwise to uniform. We have
{\small
\begin{align*}
    \sup_{w_j}|\ell_j^{(n)}\{w_j\mid (\tau_{<j}^{(n)}\circ\varphi_{<j})^{-1}(\tilde v)\}- \ell_j(w_j\mid \phi_{<j}^\star)| &= \sup_{w_j}|w_{j}^TN_{<j}(\tilde v)^T\Lambda^{(n)}N_{<j}(\tilde v)w_j^T - w_j^T\Lambda_{\geq j}w_j| \\
    &= \lambda_{\text{max}}\{N_{<j}(\tilde v)^T\Lambda^{(n)}N_{<j}(\tilde v) -\Lambda_{\geq j}\}.
\end{align*}}%
This vanishes because $N_{<j}(\tilde v)^T\Lambda^{(n)}N_{<j}(\tilde v) \to \Lambda_{\geq j}$, hence the convergence is uniform. Uniform convergence allows us to swap the liminf and the infimum \citep{braides2006handbook} to obtain: 
\begin{align*}
    &\liminf_n\inf_{w_j\in U_j\setminus K_j}[\ell_j^{(n)}\{w_j\mid (\tau_{<j}^{(n)}\circ\varphi_{<j})^{-1}(\tilde v)\}- \ell_j(\phi_j^\star\mid\phi_{<j}^\star)] \\
    &\hphantom{-------}= \inf_{w_j\in U_j\setminus K_j}\liminf_n[\ell_j^{(n)}\{w_j\mid (\tau_{<j}^{(n)}\circ\varphi_{<j})^{-1}(\tilde v)\}- \ell_j(\phi_j^\star\mid\phi_{<j}^\star)] \\
    &\hphantom{-------}= \inf_{w_j\in U_j\setminus K_j}\{\ell_j(w_j\mid\phi_{<j}^\star)- \ell_j(\phi_j^\star\mid\phi_{<j}^\star)\}
\end{align*}
which is always strictly greater than zero for any compact $K_j$ with $\phi_j^\star$ in the interior because $\phi_j^\star$ is the global minimizer of $\ell_j(\cdot \mid\phi_{<j}^\star)$ over $U_j$. This proves the asymptotic product normal form in \Cref{prop:pca_bvm}. The concentration result follows from the fact that setwise convergence implies convergence in probability.
\end{proof}


\section{Geometry background}\label{sec:geometry}


In this section we show all derivative conditions in Theorems \ref{thrm:bvm}, \ref{thrm:concentration}, and \ref{thrm:sequential} are well-defined, meaning they do not depend on choice of chart. Thus these conditions may be verified by mapping $\ell^{(n)}$ and $\ell$ to Euclidean space via any one chart and taking usual partial derivatives. Let $\mathcal{M}$ be a smooth $p$-dimensional manifold as before. We will prove the following.

\begin{lemma}\label{lem:invariant}
Assume $\ell^{(n)}:\mathcal{M}\to\mathbb{R}$ converges almost surely to $\ell$ and fix $\phi^\star\in\mathcal{M}$.
\begin{enumerate}
\item If $\ell'(\phi^\star)=0$ then the Hessian $\ell''(\phi^\star)$ is well-defined.

\item If \Cref{as:third} holds for some chart $(V,\psi)$, then it holds for every chart $(U,\varphi)$ containing $\phi^\star$ with $U$ and $\varphi$ replacing $V$ and $\psi$ in \Cref{eq:bound}, respectively.

\item Suppose $E$ satisfies \Cref{as:third}. Let $(U,\varphi)$ be any chart such that $U\cap E\neq\emptyset$ and set $\tilde\ell^{(n)}=\ell^{(n)}\circ\varphi^{-1}$ and $\tilde\ell=\ell\circ\varphi^{-1}$. Then there exists an open $E'\subseteq E$ containing $\phi^\star$ such that $\tilde\ell'$ and $\tilde\ell''$ exist on $\varphi(E')$, and $\tilde\ell^{(n)}$, $\tilde\ell^{(n)'}$, and $\tilde\ell^{(n)''}$ converge uniformly on $\varphi(E')$ to $\tilde\ell$, $\tilde\ell'$, and $\tilde\ell''$, respectively. In particular, if $\theta^{(n)}\to\phi^\star$ with $\ell^{(n)'}(\theta^{(n)})=0$ for all $n$, then $\ell'(\phi^\star)=0$ and the Hessian $\ell''(\phi^\star)$ is well-defined.

\end{enumerate}
\end{lemma}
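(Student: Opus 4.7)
The plan is to prove Part 1 first via a direct chain-rule calculation, then Part 3, and finally deduce Part 2 from Part 3.

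For Part 1, I would fix two charts $(U_1,\varphi_1)$ and $(U_2,\varphi_2)$ containing $\phi^\star$, let $T = \varphi_2 \circ \varphi_1^{-1}$ be the transition diffeomorphism, and write $\ell_i = \ell \circ \varphi_i^{-1}$. Since $\ell_1 = \ell_2 \circ T$, two applications of the chain rule give
\begin{align*}
\ell_1''(x) = T'(x)^T \ell_2''(T(x))\, T'(x) + \sum_k \partial_k \ell_2(T(x))\, T_k''(x).
\end{align*}
Evaluating at $x = \varphi_1(\phi^\star)$, the critical-point hypothesis $\ell'(\phi^\star) = 0$ makes the second term vanish, so the Hessians in the two charts are related by congruence with the invertible matrix $T'(\varphi_1(\phi^\star))$. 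Hence positive-definiteness of $\ell''(\phi^\star)$ is chart-invariant, making the condition in \Cref{as:hessian} well-defined.

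For Part 3, I would first work in the chart $(V,\psi)$ from \Cref{as:third} and pick $E' \ni \phi^\star$ with $\overline{E'}$ compact in $E \cap V$. The uniform third-derivative bound makes $\{\tilde\ell^{(n)''}_\psi\}_n$ uniformly Lipschitz on $\psi(E')$. The main obstacle is to show that the first and second derivatives of $\tilde\ell^{(n)}_\psi$ are themselves uniformly bounded in $n$, using only a bound on third derivatives together with pointwise convergence of $\ell^{(n)}$. For this I would Taylor-expand around a base point $x_0$, obtaining
\begin{align*}
\tilde\ell^{(n)}_\psi(x_0 + h) - \tilde\ell^{(n)}_\psi(x_0) = \tilde\ell^{(n)'}_\psi(x_0)\, h + \tfrac{1}{2}\, h^T \tilde\ell^{(n)''}_\psi(x_0)\, h + R_n(h),
\end{align*}
with $|R_n(h)| \leq C|h|^3$ for $C$ independent of $n$. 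The left side is bounded by pointwise convergence, so evaluating at sufficiently many small increments $h$ and inverting the resulting Vandermonde-type linear system yields uniform-in-$n$ bounds on $\tilde\ell^{(n)'}_\psi(x_0)$ and $\tilde\ell^{(n)''}_\psi(x_0)$. Repeating for each base point and combining with the Lipschitz control from the third-derivative bound gives uniform boundedness on $\psi(E')$. Arzela-Ascoli then extracts convergent subsequences, and the same Taylor identity forces every limit to equal $\tilde\ell_\psi'$ and $\tilde\ell_\psi''$, establishing their existence and forcing the full sequence to converge uniformly (and $\tilde\ell_\psi$ to be $C^2$). For an arbitrary chart $(U,\varphi)$ with $\phi^\star \in U \cap E$, shrinking $E'$ to have compact closure in $U \cap E$ and applying the chain rule to $\tilde\ell^{(n)}_\varphi = \tilde\ell^{(n)}_\psi \circ (\psi \circ \varphi^{-1})$ transfers uniform convergence to $\tilde\ell^{(n)}_\varphi$ and its first two derivatives. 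Finally, if $\theta^{(n)} \to \phi^\star$ with $\ell^{(n)'}(\theta^{(n)}) = 0$, uniform convergence of first derivatives forces $\ell'(\phi^\star) = 0$, after which Part 1 applies.

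For Part 2, three applications of the chain rule to $\tilde\ell^{(n)}_\varphi = \tilde\ell^{(n)}_\psi \circ (\psi \circ \varphi^{-1})$ express $\partial_{abc} \tilde\ell^{(n)}_\varphi$ as a polynomial in the first-, second-, and third-order partials of $\tilde\ell^{(n)}_\psi$ whose coefficients are polynomials in up-to-third derivatives of the smooth transition map. The third-order partials of $\tilde\ell^{(n)}_\psi$ are uniformly bounded by hypothesis, the first- and second-order ones by the Taylor-inversion step of Part 3, and the transition-map derivatives are bounded on any relatively compact set. Choosing $E'' \subseteq E \cap U$ open, bounded, containing $\phi^\star$, with $\overline{E''}$ compact in $V$, the supremum of $|\partial_{abc} \tilde\ell^{(n)}_\varphi|$ over $\varphi(E'' \cap U)$, $n$, and $(a,b,c)$ is then finite, verifying \Cref{as:third} in the new chart. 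Thus the only genuinely substantive step is the Taylor-inversion bootstrap in Part 3; the rest is careful bookkeeping with the multivariable chain rule.
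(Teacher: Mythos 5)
Your proof is correct, and Parts~1 and~2 use the same chain-rule computations as the paper; the interesting differences are in Part~3 and the ordering of the argument. The paper proves Part~1, then Part~2, then disposes of Part~3 in one line by citing Parts 1--2 together with \cite[Theorem~7]{miller2021asymptotic} (a result on uniform convergence of a function and its first two derivatives given pointwise convergence and uniformly bounded third derivatives on a convex set). You instead re-derive the substance of that theorem from scratch via symmetric-difference Taylor expansion around a base point, inversion of the resulting linear system to get uniform-in-$n$ bounds on $\tilde\ell^{(n)'}_\psi(x_0)$ and $\tilde\ell^{(n)''}_\psi(x_0)$, propagation of these bounds over $\psi(E')$ using the uniform Lipschitz control coming from the third-derivative bound, and an Arzel\`a--Ascoli plus uniqueness-of-limits argument. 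That buys self-containedness at the cost of length. Your reordering --- Part~3 before Part~2 --- is in fact cleaner logically: the paper's displayed bound
\(\lvert \hat\partial_k\hat\partial_j\hat\partial_i f_n\rvert \leq C(\sum\lvert\partial_k\partial_j\partial_i f_n\rvert + \sum\lvert\partial_j\partial_i f_n\rvert + \sum\lvert\partial_i f_n\rvert)\)
requires uniform bounds on \emph{first- and second-order} partials of $f_n$ in the reference chart, which the stated hypothesis \eqref{eq:bound1} does not directly give; the paper writes ``combining this with \eqref{eq:bound1} gives \eqref{eq:bound2}'' and leaves the lower-order bounds implicit (they are supplied by the Miller result invoked only in Part~3). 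You flag exactly this dependency and supply the missing bounds via the Taylor-inversion step, so your version makes the logical flow explicit where the paper compresses it.
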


We begin by defining first derivatives on manifolds as in \cite{lee2012smooth}. A function $f:\mathcal{M}\to \mathbb{R}$ is smooth at $\theta\in \mathcal{M}$ if for all charts $(U,\varphi)$ with $\theta\in U$, there is an open neighborhood $U_\theta\subseteq U$ such that $f\circ\varphi^{-1}$ is smooth on $\varphi(U_\theta)$. Let $\mathcal{C}^\infty(\mathcal{M})$ be the set of smooth functions from $\mathcal{M}$ to $\mathbb{R}$. Fix $f\in\mathcal{C}^\infty(\mathcal{M})$, a chart $(U,\varphi)$, and a point $\theta\in U$. The $i$th partial derivative of $f$ at $\theta$ is
\begin{align*}
    \partial_i\vert_\theta f &= \partial_i f(\theta) 
    = \partial_i (f\circ\varphi^{-1})[\varphi(\theta)]
\end{align*}
where on the right side $\partial_i$ is the usual partial derivative of $f\circ\varphi^{-1}:\varphi(U)\to\mathbb{R}$. Abbreviating $\partial_i\vert_\theta$ to $\partial_i$ when $\theta$ is arbitrary or understood, it follows from standard rules of differentiation that $\partial_i$ is a linear operator on $\mathcal{C}^\infty(\mathcal{M})$ and satisfies the product rule $\partial_i(fg)=f\partial_ig + g\partial_i f$ for all $f,g\in\mathcal{C}^\infty(\mathcal{M})$. More generally, a derivation at $\theta\in\mathcal{M}$ is a linear map $v:\mathcal{C}^\infty(\mathcal{M})\to\mathbb{R}$ satisfying $v(fg)=f(\theta)vg + g(\theta)vf$ for all $f,g\in\mathcal{C}^\infty(\mathcal{M})$. The tangent space to $\theta$ at $\mathcal{M}$ is the collection $T_\theta\mathcal{M}$ of derivations at $\theta$. $T_\theta\mathcal{M}$ is a $p$-dimensional vector space and $\{\partial_i\}_{i=1}^p$ is a basis for $T_\theta\mathcal{M}$ that depends on the chart and $\theta$. The differential of $f\in\mathcal{C}^\infty(\mathcal{M})$ at $\theta$ is the linear map $f'(\theta):T_\theta\mathcal{M}\to T_{f(\theta)}\mathbb{R}$ defined by 
\begin{align}\label{eq:differential}
	f'(\theta)v &=v^i\partial_i f(\theta)
		= v^i\partial_i(f\circ\varphi^{-1})\{\varphi(\theta)\}
\end{align}
where $v=v^i\partial_i$ is the representation of $v\in T_\theta\mathcal{M}$ with respect to the basis $\{\partial_i\}$. Here and throughout we use Einstein notation with matching upper and lower indices understood as a sum. For example, $v^i\partial_i= v^1\partial_1+\cdots+v^p\partial_p$. An important property of $f'$ is that it is chart-invariant. To see this, let $\{\partial_i\}$ and $\{\hat\partial_j\}$ be bases for $T_\theta\mathcal{M}$ corresponding to charts $(U,\varphi)$ and $(V,\psi)$, respectively, and let $v=v^i\partial_i=\hat v^j\hat\partial_j$. A change of basis argument via the chain rule shows $\partial_i=\partial_i(\psi\circ\varphi^{-1})^j\hat\partial_j$ and hence $\hat v_j=v^i\partial_i(\psi\circ\varphi^{-1})^j$. Suppressing $\theta$, the chain rule gives
\begin{align*}
	f'(v^i\partial_i) &= v^i\partial_i(f\circ\varphi^{-1})
		= v^i\partial_i(f\circ\psi^{-1}\circ\psi\circ\varphi^{-1}) \\
		&= v^i\hat\partial_j(f\circ\psi^{-1})\partial_i(\psi\circ\varphi^{-1})^j
		= v^i\partial_i(\psi\circ\varphi^{-1})^jf'(\hat\partial_j) \\
		&= f'(\hat v_j\hat\partial_j).
\end{align*}
Thus the value of $f'(\theta)v$ at any $\theta$ is independent of chart. In particular, $f'(\theta)=0$ if and only if $(f\circ \varphi^{-1})'[\varphi(\theta)]=0$ for any chart containing $\theta$. Hence $\ell^{(n)'}(\theta_n)=0$ is well-defined.

This approach fails to provide a chart-invariant notion of second derivatives. In the Euclidean case, the Hessian of $\tilde f:\mathbb{R}^p\to\mathbb{R}$ at $x$ is the $p$-by-$p$ matrix $\tilde f''(x)$ whose $ij$th entry is $\partial_j\partial_i \tilde f(x)$. This defines a bilinear operator $\tilde f''(x):\mathbb{R}^p\times\mathbb{R}^p\to\mathbb{R}$ via $\tilde f''(x)(u,v)=u^iv^j\partial_j\partial_i\tilde f(x)$. Motivated by this, it is natural to try to define the Hessian of $f\in\mathcal{C}^\infty(\mathcal{M})$ at $\theta\in\mathcal{M}$ to be the bilinear operator $f''(\theta):T_\theta\mathcal{M}\times T_\theta\mathcal{M}\to\mathbb{R}$ given by
\begin{align}\label{eq:hessian}
	f''(\theta)(u,v) &= u^iv^j\partial_j\partial_i f(\theta)
\end{align}
where $\partial_j\partial_if=\partial_j(\partial_i f\circ\varphi^{-1})$ is computed by composing the previous notion of first derivatives. However, this is not chart-invariant in general. To see why, let $u=u^i\partial_i=\hat u^j\hat\partial_j$ and $v=v^i\partial_i=\hat v^j\hat\partial_j$ with $\{\partial_i\}$ and $\{\hat\partial_i\}$ corresponding to $(U,\varphi)$ and $(V,\psi)$, respectively. Setting $g=\psi\circ\varphi^{-1}$, repeated use of $\partial_i=\partial_ig^j\hat\partial_j$ and $\hat v^j=v^i\partial_ig^j$ yields
\begin{align*}
	f''(u^i\partial_i,v^j\partial_j) &= u^iv^j\partial_j\partial_i f \\
		&= u^iv^j\partial_j(\partial_ig^\alpha\hat\partial_\alpha f) \\
		&= u^iv^j(\partial_j\partial_ig^\alpha\hat\partial_\alpha f + \partial_ig^\alpha\partial_j\hat\partial_\alpha f) \\
		&= u^iv^j\partial_j\partial_ig^\alpha\hat\partial_\alpha f + u^iv^j\partial_ig^\alpha\partial_jg^\beta\hat\partial_\beta\hat\partial_\alpha f \\
		&= u^iv^j\partial_j\partial_ig^\alpha\hat\partial_\alpha f + \hat u^\alpha\hat v^\beta\hat\partial_\beta\hat\partial_\alpha f \\
		&= u^iv^j\partial_j\partial_ig^\alpha\hat\partial_\alpha f + f''(\hat u^\alpha\hat\partial_\alpha,\hat v^\beta\hat\partial_\beta).
\end{align*}
So $f''(u^i\partial_i,v^j\partial_j)$ and $f''(\hat u^\alpha\hat\partial_\alpha,\hat v^\beta\hat\partial_\beta)$ differ by $u^iv^j\partial_j\partial_ig^\alpha\hat\partial_\alpha f$ which is not zero in general. However, if $\theta$ is a critical point of $f$ then this term does vanish, leaving $f''(\theta)(u^i\partial_i,v^j\partial_j)=f''(\theta)(\hat u^\alpha\hat\partial_\alpha,\hat v^\beta\hat\partial_\beta)$. Thus \eqref{eq:hessian} is chart-invariant -- and hence well-defined -- precisely at critical points of $f$, proving \Cref{lem:invariant}.1. In particular, if $f'(\theta)=0$, one can check that $f''(\theta)$ is positive-definite by computing the Euclidean Hessian $(f\circ\varphi^{-1})''[\varphi(\theta)]$ in any chart $(U,\varphi)$ containing $\theta$.

Third derivatives are also not well-defined for analogous reasons. For simplicity, we avoid interpreting third derivatives as operators and instead define the tensor of third partial derivatives $f'''$ at $\theta\in U$ as the $p\times p\times p$ array with $ijk$th entry given by $\partial_k\partial_j\partial_i f(\theta)=\partial_k\partial_j\partial_i(f\circ\varphi^{-1})[\varphi(\theta)]$. We say a sequence $(f_n''')$ is uniformly bounded on $E\subseteq\mathcal{M}$ if there is at least one chart $(V,\psi)$ with corresponding partials $\partial_i$ and $V\cap E\neq \varnothing$ such that
\begin{align}\label{eq:bound1}
	\sup_n\sup_{\theta\in E\cap V}\sup_{i,j,k} \left\lvert \partial_k\partial_j\partial_i f_n(\theta)\right\rvert < \infty.
\end{align}
To prove \Cref{lem:invariant}.2, let $(V,\psi)$ and $E$ be as in \Cref{as:third} and let $(U,\varphi)$ be another chart with partials $\hat\partial_i$ and $U\cap E\neq \varnothing$. Letting $g=\psi\circ\varphi^{-1}$ we have
\begin{align*}
    \hat\partial_k\hat\partial_j\hat\partial_i f_n &= \hat\partial_k\hat\partial_j(\partial_\alpha f_n\hat\partial_i g^\alpha) \\
		&= \hat\partial_k(\partial_\beta\partial_\alpha f_n\hat\partial_j g^\beta\hat\partial_i g^\alpha + \partial_\alpha f_n\hat\partial_j\hat\partial_i g^\alpha) \\
		&= \partial_\gamma\partial_\beta\partial_\alpha f_n\hat\partial_k g^\gamma\hat\partial_jg^\beta\hat\partial_ig^\alpha \\
		&\qquad + \partial_\beta\partial_\alpha f_n(\hat\partial_ig^\alpha\hat\partial_k\hat\partial_jg^\beta + \hat\partial_jg^\beta\hat\partial_k\hat\partial_ig^\alpha + \hat\partial_kg^\beta\hat\partial_j\hat\partial_ig^\alpha) \\
		&\qquad\quad + \partial_\alpha f_n\hat\partial_k\hat\partial_j\hat\partial_ig^\alpha.
\end{align*}
Assume without loss of generality $E$ is properly contained in $U\cap V$. Then since $g=\psi\circ\varphi^{-1}:\varphi(U\cap V)\to \psi(U\cap V)$ is smooth on $U\cap V$ and $E\cap V\cap U$ is bounded, $g$ and all its partial derivatives up to and including order three are uniformly bounded on $\varphi(E\cap U\cap V)$. Hence
\begin{align*}
	\lvert \hat\partial_k\hat\partial_j\hat\partial_i f_n\rvert &\leq C\left(\sum_{i,j,k}\lvert\partial_k\partial_j\partial_i f_n\rvert + \sum_{i,j}\lvert\partial_j\partial_i f_n\rvert + \sum_i\lvert\partial_i f_n\rvert\right)
\end{align*}
for some finite constant $C$. Combining this with \eqref{eq:bound1} gives
\begin{align}\label{eq:bound2}
	\sup_n\sup_{\theta\in E\cap U\cap V}\sup_{i,j,k} \left\lvert \hat\partial_k\hat\partial_j\hat\partial_i f_n(\theta)\right\rvert < \infty.
\end{align}
Therefore if \eqref{eq:bound1} holds for one chart satisfying the assumptions of $(V,\psi)$, then a similar bound holds for any chart overlapping with $E$. This proves \Cref{lem:invariant}.2. 

Finally, \Cref{lem:invariant}.3 follows immediately from \Cref{lem:invariant}.1, \Cref{lem:invariant}.2, and \cite[Theorem 7]{miller2021asymptotic}, where, for a given chart $(U,\varphi)$, the set $E'$ is any open subset of $E\cap U$ such that $\varphi(E')$ is a convex subset of $\mathbb{R}^p$.

\newpage


\section{Applied details}\label{sec:simulations}


\subsection{Mean and variance simulations}

\Cref{table:coverage} shows the coverage of intervals for $\mu$ for the joint/sequential Gibbs posteriors when $X$ follows different distributions, including N$(0,1)$, $\text{t}_5(0,1)$, Skew-Normal$(0,1,1)$, and Gumbel$(0,1)$. A total of $500$ datasets were generated, each with $n=1000$ independent samples from one of the above distributions. Credible intervals were estimated for each dataset, and coverage was calculated as the proportion of credible intervals containing the truth. A grid search was performed to find calibration hyperparameters which yielded $95\%$ coverage. All other simulations and data analyses in this paper selected hyperparameters using the proposed calibration algorithm in Section 3 of the main text and the stochastic approximation method in Section 3.2 of the supplement.

The sequential Gibbs posterior was sampled exactly using the sequential decomposition. The joint Gibbs posterior was sampled with Metropolis-Hastings using the proposals $\mu\sim N(\mu^{(s-1)},\varepsilon^2)$ and $\sigma^2\sim N_+(\mu^{(s-1)},\varepsilon^2)$; $\varepsilon$ was tuned adaptively so the acceptance ratio was between $25\%$ and $50\%$.

\setcounter{algocf}{1} 
\begin{algorithm}[t]
\caption{Automatic hyperparameter selection}\label{alg:stochastic}
\KwData{Target radius $\hat r_b$, initial parameter $\eta_0$, step size $\varepsilon_t$.}
\KwResult{Hyperparameter $\eta$ with $\hat r_g(\eta)\approx \hat r_b$}
$t\gets 0$\;
\While{\text{not converged}}{
    $\delta_t \gets \{\hat r_g(\eta_t)-\hat r_b\}/\hat r_b$;\\
    $\eta_{t+1} \gets \eta_t \exp(\delta_t/\varepsilon_t)$; \\
    $t \gets t+1$;\\
}
\Return $\eta_t$
\end{algorithm}

\subsection{Sampling from the Sequential Posterior}\label{sec:sampling}
\textcolor{black}{In this subsection we discuss challenges associated with sampling from the sequential Gibbs posterior and outline solutions for future applications. Importantly, the algorithms presented may only be required for small sample sizes -- when $n$ is large, the sequential posterior can be well approximated by a Gaussian with analytic moments defined in Theorem 3. For readability, we simplify to the case when $J=2$ and focus on sampling $(\theta_1, \theta_2)$ according to the density
\begin{align*}
    \pi(\theta_1,\theta_2\mid x) &= \frac{1}{z_1(x)}\exp\{-\ell_1^{(n)}(\theta_1\mid x)\}\frac{1}{z_2(\theta_1, x)}\exp\{-\ell_2^{(n)}(\theta_2\mid \theta_1, x)\}\pi_1(\theta_1)\pi_2(\theta_2), \\
    z_1(x) &= \int_{\mathcal{M}_1}\exp\{-\ell_1^{(n)}(\theta_1\mid x)\}\pi_1(\theta_1)d\theta_1, \\ 
    z_2(\theta_1, x) &= \int_{\mathcal{M}_2}\exp\{-\ell_2^{(n)}(\theta_2\mid \theta_1, x)\}\pi_2(\theta_2)d\theta_2,
\end{align*}
where the calibration constants $\eta=(\eta_1, \eta_2)$ are omitted. If exact sampling is possible, such as in principal component analysis, then samples from the joint distribution may be obtained by drawing $\theta_1 \sim \pi_1(\cdot\mid x)$ and $\theta_2\mid \theta_1\sim \pi_2(\cdot\mid \theta_1, x)$.}

\textcolor{black}{In most situations, efficient exact sampling is not available, and the unknown normalizing constant $z_2(\theta_1, x)$ is problematic. Consider a standard Metropolis-Hastings algorithm, currently in state $(\theta_1^t,\theta_2^t)$. A new state $\theta_1'$ is proposed according to $q_1(\cdot \mid \theta_1^t,\theta_2^t)$. The probability of accepting $\theta_1'$ depends on the posterior ratio
\begin{align*}
    L(\theta_1'\mid \theta_1^t,\theta_2^t)&= \frac{\pi(\theta_1',\theta_2^{t}\mid x)}{\pi(\theta_1^{t},\theta_2^{t}\mid x)} = \frac{\exp\{-\ell_1^{(n)}(\theta_1'\mid x)\}}{\exp\{-\ell_1^{(n)}(\theta_1\mid x)\}}\frac{z_2(\theta_1^t, x)}{z_2(\theta_1', x)}\frac{\exp\{-\ell_2^{(n)}(\theta_2^{t}\mid \theta_1', x)\}}{\exp\{-\ell_2^{(n)}(\theta_2^t\mid \theta_1^t, x)\}}\frac{\pi_1(\theta_1')}{\pi_1(\theta_1^t)}.
\end{align*}
and the ratio of proposal densities. Importantly, the normalizing constants do not cancel. Although the sequential Gibbs posterior is new, identical sampling problems appear when studying cut posteriors in modular Bayesian analysis and several solutions have been proposed. One solution is to estimate the normalizing constants, for example using importance sampling. Many numerical integration methods have poor convergence properties for problems with high-dimensional parameters and may not be computationally feasible on manifolds.} 

\textcolor{black}{As a general solution, we instead prefer to construct an algorithm in which $\theta_1$ is sampled without evaluating $\pi_2(\cdot\mid x, \theta_1)$. A simple solution is to generate high-quality samples of $\theta_1\sim \pi_1(\cdot\mid x)$ and then treat these as exact samples when drawing $\theta_2\mid \theta_1$. For example, in the first stage one could construct an MCMC algorithm for $\theta_1$ targeting $\pi_1(\cdot\mid x)$ and run this algorithm until convergence to obtain samples $\theta_1^t$, $t=1,\dots,T$. In the second stage one constructs an MCMC algorithm targeting $\pi_2(\cdot\mid\theta_1^t)$ for each $t$; this is run until convergence to obtain samples $\theta_2^s\mid \theta_1^t$, $s=1,\dots, S$. This approach may require a long burn-in for $\theta_1$ and necessitates many chains for sampling $\theta_2$, which may make evaluating convergence difficult. Generalizing to $J>2$ parameters requires exponentially more chains -- for example, sampling $\theta_3\mid \theta_1^t,\theta_2^s$ for all pairs previous of samples $\theta_1^t$ and $\theta_2^2$.}

\textcolor{black}{\cite{jacob2020unbiased} develop a general framework for unbiased integral estimation using coupled Markov chains, which can be used to speed up the naive approach just discussed. The key idea is to factor expectations under the sequential posterior using the law of iterated expectations and to estimate each of the iterated expectations with a telescoping sum. The telescoping sum is defined in terms of coupled Markov chains and depends almost surely on only a finite number of samples, drastically shortening the number of iterations required for each chain. Unfortunately, this approach also scales poorly to $J>2$ losses. For example, estimating expectations for $\theta_3\mid \theta_1,\theta_2$ requires Markov chains for $\theta_3\mid \theta_1^t,\theta_2^s$ with $s$, $t$ running over all possible pairs.}

\textcolor{black}{\cite{plummer2015cuts} develop an alternative solution based on the limiting distribution of an intentionally misspecified Metropolis-Hastings algorithm in which $\theta_1'\sim q_1(\cdot\mid\theta_1^t)$ is accepted based on the posterior ratio
\begin{align*}
     L_1(\theta_1'\mid \theta_1^t)&= \frac{\pi_1(\theta_1' \mid x)}{\pi_1(\theta_1^t\mid x)} 
\end{align*}
and proposal densities, and $\theta_2'\mid \theta_1^t\sim q_2(\cdot\mid \theta_1^t, \theta_2^t)$ is accepted based on the posterior ratio
\begin{align*}
     L_2(\theta_1'\mid \theta_1^t)&= \frac{\pi(\theta_2'\mid \theta_1^t, x)}{\pi(\theta_2^{t}\mid \theta_1^{t}, x)} 
\end{align*}
and proposal densities. The limiting distribution of samples from this algorithm is shown to be the weighted density $w(\theta_1, \theta_2)\pi(\theta_1,\theta_2\mid x)$, with
\begin{align*}
    w(\theta_1,\theta_2) &= \int \frac{\pi(\theta_2'\mid x, \theta_1')}{\pi(\theta_2' \mid x, \theta_1)}K_1(\theta_1\to\theta_1')K_2(\theta_2\to\theta_2'\mid\theta_1)d\theta_1'd\theta_2'.
\end{align*}
where $K_1$ and $K_2$ are induced transition kernels. \cite{plummer2015cuts} demonstrate that the weight function converges to $1$ under the double limit where (i) the transitions $\theta_1^t\to\theta_1^{t+1}$ only allow vanishingly small steps and (ii) the proposed value $\theta_2'$ does not depend on $\theta_2^t$. This leads to a path sampling algorithm which introduces intermediate values of $\theta_1$ to approximate a small step-size and reduce dependence between draws of $\theta_2$.}

\textcolor{black}{Iteration $t$ of the path sampling algorithm begins with samples $\theta_1^{t-1}$ and $\theta_1^{t}$ from the misspecified Metropolis-Hastings algorithm. These are used to define a discrete path with $m$ steps according to the equation
\begin{align*}
    \theta_1^{t-1,j} = (j/m)\theta_1^{t-1} + (1-j/m)\theta_1^{t},\quad j=1,...,m.
\end{align*}
A new value $\theta_2^t$ is generated by sampling along this path. Explicitly, one proposes/accepts $\theta_2^{t,j}\sim q_2(\cdot\mid\theta_1^{t,j},\theta_2^{t,j-1})$ for $j=1,...,m$ using the misspecified Metropolis-Hastings algorithm, and retains the final value $\theta_2^t = \theta_2^{t-1,m}$. \cite{plummer2015cuts} argue the weight function converges to $1$ in the limit $m\to\infty$, in which the steps between $\theta_1^{t-1,j}$ and $\theta_1^{t-1,j+1}$ become arbitrarily small and $\theta_1^t$ loses dependence on $\theta_1^{t-1}$.}

\textcolor{black}{A major advantage of the path sampling approach is that it can be generalized to $J>2$ losses with minimal additional computation by concatenating all previous parameters into a single vector -- for example, sampling $\theta_3^{t,j}\sim q_3(\cdot\mid\theta_1^t, \theta_2^t, \theta_3^{t,j-1})$ along the path $(\theta_1^{t,j}, \theta_2^{t,j}) = (j/m)(\theta_1^{t}, \theta_2^{t}) + (1-j/m)(\theta_1^{t+1}, \theta_2^{t+1})$. To our knowledge, this is the only solution which scales acceptably with the number of losses, making it the most promising algorithm when exact sampling is not feasible. For non-Euclidean parameters, paths must be induced on the underlying manifolds. This can be achieved by mapping to Euclidean space with a chart centered on the current state. }

\subsection{Automatic hyperparameter selection}
We provide additional details for the hyperparameter tuning algorithm proposed in Section $3$. Let $\hat r_b$ be the bootstrap estimate of the radius of a 95\% confidence ball and $r_g(\eta)$ the radius of a 95\% credible ball under a Gibbs posterior with parameter $\eta$. The goal is to find $\eta^\star$ with $r_g(\eta^\star)=\hat r_b$. In most cases $r_g(\eta)$ is not available analytically and we approximate it pointwise via Monte Carlo. Let $\hat r_g(\eta)$ denote any unbiased approximation of $r_g(\eta)$.

Stochastic approximation \citep{robbins1951stochastic} is popular algorithm for solving equations of the form $E[f(\theta)]=c$ where $\theta$ is a random variable, $f$ is a function, and $c$ is a constant. Inspired by \cite{martin2022direct}, we use this algorithm to approximately solve $\hat r_g(\eta) = \hat r_b$. This is formalized in Algorithm 2, which was used sequentially to calibrate the principal component analysis posterior in simulations and on the application to crime data. A summable step size, such as $\varepsilon_t=1/t^2$, is theoretically required to guarantee convergence, although in practice we find the algorithm converges faster with a non-summable step size such as $\varepsilon_t=1$ or $\varepsilon_t=1/t$, producing $\eta$ with $|\{\hat r_g(\eta_t)-\hat r_b\}/\hat r_b| < 0.01$ in 20-50 iterations.

\subsection{Crime Analysis}\label{sec:crime_sup}

\begin{figure}[t]
    \centering
    {\includegraphics[width=16cm]{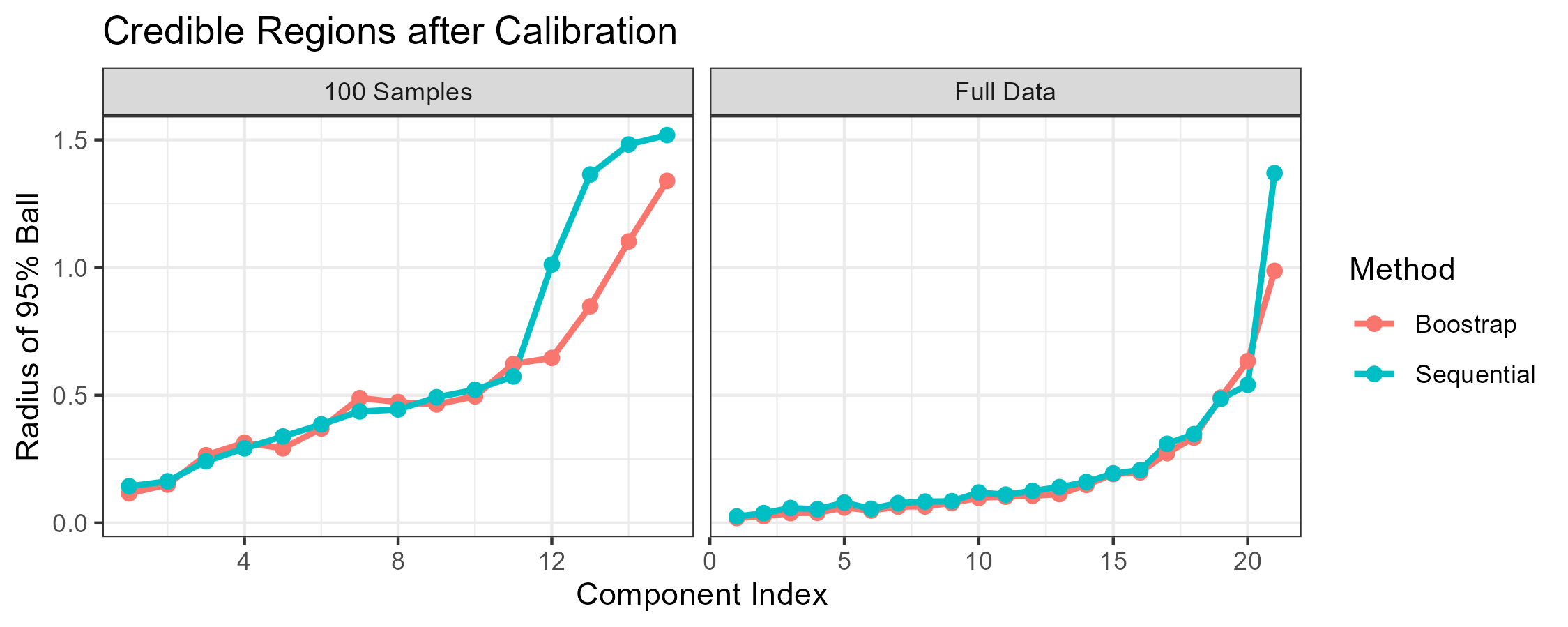}}
    \caption{Radii of credible/confidence balls around the maximum likelihood estimate for the sequential posterior and the bootstrap, after rotating all samples towards the maximum likelihood estimate.}%
    \label{fig:calibration}%
\end{figure}

The communities and crime dataset \citep{misc_communities_and_crime_183} contains $128$ features. For simplicity, we restrict analysis to the $p=99$ features which are available for all communities. Variable descriptions are available from the University of California, Irvine Machine Learning Repository at \url{https://archive.ics.uci.edu/dataset/183/communities+and+crime}. We begin by running loss-based principal component analysis on the full dataset of $n=1994$ communities after centering/scaling the features. The number of components required to explain $90\%$ of the variance is $k=21$. The $10$ variables with the largest absolute loadings on each component are presented in Tables \ref{table:vars1} and \ref{table:vars2}. We interpret the first five components as follows:
\begin{enumerate}
    \item \textbf{Income and Family Stability:} Variables related to median family income, median individual income, percentage of children in two-parent households, and percentage of households with investment/rent income.
    \item \textbf{Immigration and Language:} Variables related to recent immigration patterns, the percentage of the population that is foreign-born, and language proficiency.
    \item \textbf{Household Size and Urbanization:} Variables related to the number of persons per household, household size, and urban population.
    \item \textbf{Age and Stability:} Variables indicating the percentage of the population in older age groups and the percentage of people living in the same house or city as they did in previous years.
    \item \textbf{Population and Urban Density:} Variables related to population, land area, and overall community size.
    \item \textbf{Employment and Marital Status:} Variables related to relationship status, divorce rates, and advanced career progression.
\end{enumerate}

The remaining latent factors do not have clear interpretations, often simultaneously including variables related to family circumstance, age, race, and housing metrics. 

This dataset contains twenty times as many observations as features, hence credible regions for scores using the full data will be quite narrow. We subsample the data in order to better illustrate the nature of the uncertainty quantification provided our method. This is achieved by fitting $k$-means with $m=100$ clusters to the violent crime responses and then choosing a representative community within each cluster.

We use the calibration algorithm proposed in Section 3 to calibrate the Gibbs posterior on both the subsampled and full datasets. Specifically, the bootstrap was used to estimate the radius $\hat r_j$ of a $95\%$ confidence ball for $v_j$ centered at $\hat v_j$, and then Algorithm 2 was combined with Algorithm 1 to adjust $\eta_j$ so that a $95\%$ credible ball for $v_j$ had radius approximately $\hat r_j$. This was done sequentially: first for $\eta_1$, then $\eta_2$, and so on.

All samples were Procrustes aligned to the empirical loss minimizer prior to calculating intervals. We used a constant step size of $1$ and terminated calibration of the $j$th component whenever (1) the radius of the confidence/ball agreed to within $1\%$, (2) $\eta_j$ changed by less than $1\%$, or (3) the algorithm had run for 20 iterations. Figure \ref{fig:calibration} compares the Gibbs posterior balls to the bootstrap balls. In both cases, the first $10$ components are calibrated nearly exactly; and components thereafter have moderate errors. This is to be expected as the variance accumulates quickly with component index. More accurate calibration could be achieved by drawing more samples from the Gibbs posterior to reduce Monte Carlo error and running the stochastic approximation algorithm for more steps.

\subsection{Blueprint for Hierarchical Models}\label{sec:multilevel}

\textcolor{black}{The sequential Gibbs posterior can be used for a wide array of applications, including hierarchical models. We outline this construction for generalized linear regression with random effects. Let $x_{ik}$ and $y_{ik}$ be a covariate vector and response for sample $i=1,...,n_k$ in group $k=1,...,K$. It is common to infer group-specific coefficients $\theta_k$ and to share information between groups by shrinking to common coefficients $\theta_0$. Generalized linear models only require specification of the conditional mean and variance,
\begin{align*}
    E(Y_{ik}\mid x_{ik}) &= g^{-1}(x_{ik}^T\theta_k)\\ \text{var}(Y_{ik}\mid x_{ik})&=\gamma_k V\{g^{-1}(x_{ik}^T\theta_k)\}
\end{align*}
with $g$ a strictly monotone and differentiable link function, $V$ a positive and continuous function, and $\gamma_k$ a dispersion parameter. It is common to base inference for coefficients on the log-quasi-likelihood \citep{wedderburn1974quasi}:
\begin{align*}
    \ell_k^{(n_k)}(\theta_k\mid x) &= \frac{1}{\gamma_k}\sum_{i=1}^{n_k}\int_{0}^{g^{-1}(x_{ik}^T\theta_k)}\frac{y_{ik}-t}{V(t)}dt.
\end{align*}
The above loss allows for robust, semi-parametric inference for coefficients and often has appealing theoretical properties such as consistency \citep{chen1999strong}. The integrals are available in closed form for Gaussian, Poisson, and binomial distributions. \cite{agnoletto2025bayesian} define a Gibbs posterior using the log-quasi-likelihood as a loss, allowing for Bayesian uncertainty quantification in generalized linear models without group structure, subject to two mild moment assumptions. They derive a plug-in estimate for the dispersion parameter, avoiding the need to calibrate the Gibbs posterior, and demonstrate excellent coverage in realistic simulations.}

\textcolor{black}{It is straightforward to extend this construction to random effect models, for example by first estimating global coefficients $\theta_0$ using a loss $\ell_0^{(n)}(\theta_0\mid x)$ depending on all data (for example, a log-quasi-likelihood), and then estimating the group-specific coefficients using the log-quasi-likelihoods above, with an additional term that shrinks towards the global parameters:
\begin{align*}
    \ell_k^{(n_k)}(\theta_k\mid x, \theta_0) &= -\lambda||\theta_k - \theta_0||^2 + \frac{1}{\gamma_k}\sum_{i=1}^{n_k}\int_{0}^{g^{-1}(x_{ik}^T\theta_k)}\frac{y_{ik}-t}{V(t)}dt.
\end{align*}
The corresponding sequential posterior can be calibrated using existing plug-in estimates or our general calibration algorithm, and the shrinkage parameter $\lambda$ can be tuned by cross-validation. Sampling for general link functions is possible with the path algorithm detailed in Section S3.2.}

\newpage

\begin{table}
    \hspace*{-1.9cm}
    \centering
    \begin{tabular}{cl}
        \toprule
        \textbf{Latent Factor Index} & \textbf{Variables} \\
        \midrule
        1 & medFamInc, medIncome, PctKids2Par, pctWInvInc, PctPopUnderPov, \\
          & PctFam2Par, PctYoungKids2Par, perCapInc, pctWPubAsst, PctHousNoPhone \\
        \addlinespace
        2 & PctRecImmig10, PctRecImmig8, PctRecImmig5, PctRecentImmig, \\
          & PctForeignBorn, PctSpeakEnglOnly, PctNotSpeakEnglWell, \\
          & PctPersDenseHous, racePctAsian, racePctHisp \\
        \addlinespace
        3 & PersPerOccupHous, PersPerFam, PersPerOwnOccHous, householdsize, \\
          & PersPerRentOccHous, PctLargHouseOccup, HousVacant, PctLargHouseFam, \\
          & numbUrban, population \\
        \addlinespace
        4 & PctSameCity85, agePct12t29, PctSameHouse85, agePct16t24, \\
          & agePct12t21, agePct65up, pctWSocSec, PctImmigRec5, PctImmigRecent, \\
          & PctImmigRec8 \\
        \addlinespace
        5 & population, LandArea, numbUrban, NumUnderPov, NumIlleg, \\
          & HousVacant, NumInShelters, agePct65up, NumStreet, PctHousLess3BR \\
        \addlinespace
        6 & PctEmplProfServ, MalePctDivorce, TotalPctDiv, FemalePctDiv, \\
          & agePct12t21, MalePctNevMarr, MedYrHousBuilt, agePct16t24, \\
          & PctVacMore6Mos, PctEmploy \\
        \addlinespace
        7 & racepctblack, PctIlleg, PctEmplManu, PctEmploy, PctHousOccup, \\
          & PctWorkMomYoungKids, PctWorkMom, pctWWage, racePctWhite, \\
          & PctBornSameState \\
        \addlinespace
        8 & PctHousOccup, racepctblack, PctEmplManu, MedYrHousBuilt, PopDens, \\
          & racePctWhite, PctOccupManu, PctBornSameState, PersPerRentOccHous, \\
          & MedRentPctHousInc \\
        \addlinespace
        9 & PctImmigRec8, PctImmigRec5, PctImmigRec10, PctImmigRecent, \\
          & MalePctNevMarr, pctWFarmSelf, MedOwnCostPctInc, MedRentPctHousInc, \\
          & agePct12t29, agePct16t24 \\
        \addlinespace
        10 & PctWorkMomYoungKids, PctWorkMom, MedOwnCostPctIncNoMtg, pctWFarmSelf, \\
           & pctUrban, pctWRetire, PctHousOccup, PctVacMore6Mos, PctWOFullPlumb, \\
           & MedOwnCostPctInc \\
        \bottomrule
    \end{tabular}
    \caption{Latent factors 1-10 and their corresponding variables. Continued in Table \ref{table:vars2}}
    \label{table:vars1}
\end{table}

\begin{table}
    \hspace*{-1.9cm}
    \centering
    \begin{tabular}{cl}
        \toprule
        \textbf{Latent Factor Index} & \textbf{Variables} \\
        \midrule
        11 & PctWorkMom, pctWRetire, PctWorkMomYoungKids, PctHousOwnOcc, \\
           & MedRentPctHousInc, pctWSocSec, PctImmigRec5, agePct65up, \\
           & PctImmigRec8, racepctblack \\
        \addlinespace
        12 & MedOwnCostPctIncNoMtg, PctWorkMomYoungKids, PctWorkMom, \\
           & PctEmplProfServ, PctEmplManu, PctImmigRec5, pctWFarmSelf, \\
           & PctImmigRecent, racePctWhite, MedOwnCostPctInc \\
        \addlinespace
        13 & PctUsePubTrans, PctEmplManu, pctUrban, MedRentPctHousInc, \\
           & MedOwnCostPctIncNoMtg, OwnOccMedVal, OwnOccLowQuart, OwnOccHiQuart, \\
           & PopDens, PctSameState85 \\
        \addlinespace
        14 & PctSameState85, PctEmplManu, PctBornSameState, MedRentPctHousInc, \\
           & AsianPerCap, racepctblack, pctWFarmSelf, FemalePctDiv, TotalPctDiv, \\
           & MalePctDivorce \\
        \addlinespace
        15 & indianPerCap, LemasPctOfficDrugUn, pctUrban, PctUsePubTrans, \\
           & NumStreet, MedNumBR, racePctAsian, NumInShelters, LandArea, \\
           & PopDens \\
        \addlinespace
        16 & indianPerCap, LemasPctOfficDrugUn, pctUrban, NumStreet, \\
           & PctUsePubTrans, PopDens, LandArea, AsianPerCap, MedNumBR, \\
           & NumInShelters \\
        \addlinespace
        17 & PctVacMore6Mos, PctVacantBoarded, MedYrHousBuilt, racePctAsian, \\
           & indianPerCap, LemasPctOfficDrugUn, MedOwnCostPctIncNoMtg, pctWRetire, \\
           & MedOwnCostPctInc, AsianPerCap \\
        \addlinespace
        18 & LemasPctOfficDrugUn, pctWFarmSelf, AsianPerCap, HispPerCap, \\
           & MedNumBR, blackPerCap, PctVacantBoarded, PctHousOccup, \\
           & MedOwnCostPctIncNoMtg, PctWOFullPlumb \\
        \addlinespace
        19 & AsianPerCap, blackPerCap, PctWOFullPlumb, PctLess9thGrade, \\
           & HispPerCap, LemasPctOfficDrugUn, MedNumBR, pctUrban, \\
           & MedRentPctHousInc, NumStreet \\
        \addlinespace
        20 & racePctAsian, pctWFarmSelf, blackPerCap, LandArea, PctWOFullPlumb, \\
           & MedRentPctHousInc, PctLargHouseFam, MedNumBR, PctLargHouseOccup, \\
           & MedOwnCostPctIncNoMtg \\
        \addlinespace
        21 & PctWOFullPlumb, LemasPctOfficDrugUn, NumStreet, pctWFarmSelf, \\
           & racePctAsian, racePctWhite, racepctblack, MedYrHousBuilt, \\
           & PctSameState85, PctVacMore6Mos \\
        \bottomrule
    \end{tabular}
    \caption{Latent factors 11-21 and their corresponding variables.}
    \label{table:vars2}
\end{table}

\end{document}